\begin{document}

\hyphenation{con-cu-rrency}
\hyphenation{cu-rrent}
\hyphenation{co-ro-llary}
\hyphenation{rea-cha-ble}
\hyphenation{de-ci-da-ble}
\hyphenation{mann-er}
\hyphenation{cha-nnel}
\hyphenation{cha-rac-te-ri-stic}
\hyphenation{bei-ng}
\hyphenation{fo-llow-ing}
\hyphenation{be-ginn-ing}
\hyphenation{cha-llen-ger}
\hyphenation{de-fen-der}
\hyphenation{co-rre-sponding}
\hyphenation{di-ffe-rence}
\hyphenation{con-si-der}
\hyphenation{con-si-dered}
\hyphenation{bi-simu-la-tion}
\hyphenation{bi-simi-la-rity}
\hyphenation{bi-simi-lar}
\hyphenation{pre-bi-simu-la-tion}
\hyphenation{pre-bi-simi-la-rity}
\hyphenation{bet-ween}
\hyphenation{tran-si-tion}
\hyphenation{lab-elled}
\hyphenation{auto-ma-ta}
\hyphenation{auto-ma-ton}
\hyphenation{syn-the-size}
\hyphenation{charac-te-ri-za-tion}
\hyphenation{charac-te-ri-za-tions}
\hyphenation{equi-va-lence}
\hyphenation{ob-ser-va-tional}
\hyphenation{assign-ing}
\hyphenation{pre-or-der}
\hyphenation{ana-ly-sis}
\hyphenation{ana-ly-zing}
\hyphenation{modell-ing}
\hyphenation{sy-mmet-ric}
\newcommand{\track}[1]{{\textcolor{red}{#1}}}
\newcommand{\remove}[1]{{\textcolor{green}{#1}}}
\newcommand{\delete}[1]{{}}
\newcommand{\sembrack}[1]{[\![#1]\!]}
\newcommand{\existsdelay}{\exists \!\!\!\! \exists}
\newcommand{\foralldelay}{\forall \!\!\!\! \forall}
\newcommand{\realpos}{\mathbb{R}_{\ge 0}}
\newcommand{\Tau}{\mathrm{T}}
\newcommand{\cmark}{\ding{52}}
\renewcommand{\to}[1]{\xrightarrow{#1}}
\newcommand{\isin} {\:\mbox{\texttt{\underline{in}}}\:}
\newcommand{\setg}[2]{\Upsilon_{(#1,#2)}}
\newcommand{\starts}[2]{\mathrm{Strt}(#1)_{#2}}
\newcommand{\send}[2]{\mathrm{End}(#1)_{#2}}
\newcommand{\range}[2]{\mathrm{Ran}(#1)_{#2}}
\newcommand{\lowerbound}[1]{LB(#1)}
\newcommand{\upperbound}[1]{UB(#1)}
\newcommand{\beforeg}{\lll}
\newcommand{\before}{\lessdot}
\newcommand{\set}[1]{\{{#1}\}}
\renewcommand{\diamond}[1]{\langle #1 \rangle}
\DeclarePairedDelimiter\ceil{\lceil}{\rceil}
\DeclarePairedDelimiter\floor{\lfloor}{\rfloor}
\newcommand{\mlineqn}[5]{\begin{center}\ensuremath{#1 = \left\{ 
  \begin{array}{l l}
    #2 & \quad \text{#3}\\
    #4 & \quad \text{#5}
  \end{array} \right.}\end{center}}

\newcommand{\TGame}[3]{%
\ifthenelse{\isempty{#1}}
{\ifthenelse{\isempty{#3}}{\Gamma_{#2}^{{#3}}}{\Gamma_{#2}^{{#3}}}}               
            {#1\!-\!\Gamma_{#2}^{{#3}}}    
}

\newtheorem{fact}[theorem]{Fact}

\title{Reducing Clocks in Timed Automata while Preserving Bisimulation}

\author{Shibashis Guha\thanks{The research of Shibashis Guha was supported by
Microsoft Corporation and Microsoft Research India under the Microsoft Research India PhD Fellowship Award.}
$\:\;\:\:\;\:\:\;\:$ \quad \qquad Chinmay Narayan $\:\;\:\:\;\:\:\;\:$ \quad \qquad S. Arun-Kumar}

\institute{Indian Institute of Technology Delhi \\
\email{\{shibashis, chinmay, sak\}@cse.iitd.ac.in}
}

\maketitle

\begin{abstract}
Model checking timed automata becomes increasingly complex with the increase in the number of clocks. Hence it is desirable that one constructs an automaton with the minimum number of clocks possible. The problem of checking whether there exists a timed automaton with a smaller number of clocks such that the timed language accepted by the original automaton is preserved is known to be undecidable.
In this paper, we give a construction,
which for any given timed automaton produces a timed bisimilar automaton 
with the least number of clocks. Further, we show that such an automaton with the minimum possible number of clocks can be constructed in time that is doubly exponential in the number of clocks of the original automaton.
\end{abstract}


\section{Introducton} \label{sec-clockredIntro}
Timed automata \cite{AD1} is a formalism for modelling and analyzing real time systems. The complexity of model checking is dependent on the number of clocks of the timed automaton (TA) \cite{AD1,ACD1}. 
Model checkers use the region graph or a zone graph construction
for analysing reachability and other properties in timed automata.
These graphs have sizes exponential in the number of clocks of the timed automaton.
The algorithms for model checking in turn depend on the sizes of these graphs.
Hence it is desirable to construct a timed automaton with the minimum number of clocks that preserves some property of interest
(such as timed language equivalence or timed bisimilarity).
Here we show that checking the existence of a timed automaton with fewer clocks
that is timed bisimilar to the original timed automaton is decidable.
Our method is constructive and we provide a 2-EXPTIME algorithm to construct the timed bisimilar automaton with the least possible number of clocks. 
We also note that if the constructed TA has a smaller number of clocks,
then it implies that there exists an automaton with a smaller number of clocks accepting the same timed language.

\textbf{Related work:} 
In \cite{DY1}, an algorithm has been provided to reduce the number of clocks of a given timed automaton.
It produces a new timed automaton that is timed bisimilar to the original one. The algorithm detects a set of \emph{active clocks} at every location and partitions these active clocks into classes such that all the clocks belonging to a class in the partition always have the same value.
However, this may not result in the minimum possible number of clocks.
The algorithm described in the paper \cite{DY1} works on the timed automaton directly rather than on its semantics.
For instance, if a constraint associated with clock $x$ implies a constraint
associated with clock $y$, and the conjunction of the two constraints appears on an edge,
then the constraint associated with clock $y$
may be eliminated.
However, the algorithm of \cite{DY1} does not capture such implication.
Also by considering 
constraints on more than one outgoing edge from a location,
e.g.  $l_0 \to{a, x \le 3, \emptyset} l_1$ and $l_0 \to{a, x > 3, \emptyset} l_2$
collectively, we may sometimes eliminate the constraints
that may remove some clock. This too has not been accounted for by the algorithm of \cite{DY1}.

In a related line of work
\cite{Trip1}, it has been shown that no algorithm can decide the
following two things together.
\begin{enumerate}[(i)]
\item minimality of the number of clocks while preserving the timed language and 
\item for the non-minimal case finding a timed language equivalent automaton with fewer clocks.
\end{enumerate}
Also as mentioned earlier, for a given timed automaton,
the problem of determining whether there exists another TA with fewer clocks accepting the same timed language is undecidable\index{clock reduction!preserving timed language} \cite{Fin1}.

Another result appearing in \cite{LLW1} which uses the region-graph construction is the following.
A $(C, M)$-automaton\index{(C, M)-automaton} is one with $\lvert C \lvert$ clocks and $M$ is the largest integer appearing in the timed automaton.
Given a timed automaton $A$, a set of clocks $C$ and an integer $M$, checking the existence of a ($C, M$)-automaton that is timed bisimilar to $A$ is shown to be decidable.
The method described there constructs a logical formula called the \emph{characteristic formula} and checks whether there exists a $(C,M)$-automaton that satisfies it.
Further, it is shown that a pair of automata satisfying the same characteristic formula are timed bisimilar.
The paper also considers the following problem $P$ and leaves it open.
\begin{quote}
Does there exist a timed automaton with $|C|$ clocks
that is timed bisimilar to $A$?
\end{quote}
We solve the following problem $Q$.
\begin{quote}
Given a timed automaton $A$, construct a timed bisimilar automaton $B$
with the least number of clocks.
\end{quote}
Solving the problem $Q$ also solves the problem $P$ and vice versa.
If $B$ has say $|C'|$ clocks and $|C| \ge |C'|$,
then the answer to the problem $P$ is in the affirmative.
For the problem $P$, if the answer is `yes' for $|C|$ clocks
but the answer is `no' for $|C|-1$ clocks,
then it is clear that the timed automaton
with the minimum possible number of clocks that is timed bisimilar to $A$
can have no less than $\lvert C \lvert$ clocks.

The rest of the paper is organized as follows: in Section \ref{sec-ta}, we describe timed automata and introduce several concepts that will be used in the paper. We also describe the construction of the zone graph
in Section \ref{sec-zoneGraph} used in reducing the number of clocks.
This construction of the zone graph has also been used in \cite{GKNA1}
for deciding several timed and time abstracted relations.
In Section \ref{sec-method}, we discuss our approach in detail along with a few examples. Section \ref{sec-conc} is the conclusion.

\section{Timed Automata}\label{sec-ta}
Formally, a \emph{timed automaton} (TA) \cite{AD1} is defined as a tuple $A=(L, Act, l_0, C, E)$ where
$L$ is a finite set of locations, $Act$ is a finite set 
of visible actions, $l_0 \in L$ is the initial location,
$C$ is a finite set of clocks and $E \subseteq L \:\times\: \mathcal{B}(C) \:\times\: Act \:\times\: 2^C \:\times\: L$
is a finite set of \emph{edges}. The set of constraints or guards on the edges, denoted $\mathcal{B}(C)$, is 
given by the grammar $g ::= \; x \bowtie k \:|\: g \wedge g$,
where $k \in \mathbb{N}$ and $x \in C$ and $\bowtie \: \in \: \{\le, <, =, >, \ge\}$.
Given two locations $l, l'$, a transition from $l$ to $l'$ is of the form 
$(l, g, a, R, l')$ i.e. a transition from $l$ to $l'$  on action $a$ is possible if the constraints 
specified by $g$ are satisfied; $R \subseteq C$ is a set of clocks which are reset to zero during  
the transition.

The semantics of a timed automaton(TA) is described by a \emph{timed labelled
transition system} (TLTS) \cite{LAKJ1}. The timed labelled transition system $T(A)$ generated by $A$ is defined as 
$T(A) = (Q, Lab, Q_0, \{\stackrel{\alpha}{\longrightarrow} | \alpha \in Lab\})$, where 
$Q \: = \: \{(l,v)\:|\: l \in L, v \in {\realpos}^{|C|}\}$ is the set of  \emph{states}, each of which is of the form $(l,v)$, where $l$ is a location of the timed automaton 
and $v$ is a valuation in $|C|$ dimensional real space
where each clock is mapped to a unique dimension in this space;
$Lab = Act \cup \realpos$ is the set of labels. 
Let $v_0$ denote the valuation such that $v_0(x)=0$ for all 
$x \in C$. $Q_0=(l_0, v_0)$ is the initial state of $T(A)$.
A transition may occur in one of the following ways:\\ 
(i) \emph{Delay transitions} : $(l,v)\stackrel{d}{\longrightarrow}(l, v+d)$. Here, $d \in \realpos$ and $v + d$ is the valuation
in which the value of every clock is incremented by $d$. \\
(ii) \emph{Discrete transitions} : $(l,v) \stackrel{a}{\longrightarrow} (l', v')$ if for an edge 
$e=(l,g,a,R,l') \in \: E$,
$v$ satisfies the constraint $g$, (denoted $v \models g$),
$v' = v_{[R \leftarrow \overline{0}]}$,
where $v_{[R \leftarrow \overline{0}]}$ denotes that every clock in $R$ has been reset to 0, while the remaining clocks are unchanged.
From a state $(l,v)$, if $v \models g$, then there exists an $a$-transition to a state $(l', v')$; 
after this,  the clocks in $R$ are reset while those in $C\backslash R$ remain unchanged. 

For simplicity, we do not consider annotating 
locations with clock constraints (known as \emph{invariant conditions} \cite{HNSY1}). Our results extend in a
straightforward manner to timed automata with invariant conditions.
In Section \ref{sec-method}, we provide the modifications to our method for dealing with location invariants.
We now define various concepts that will be used in the rest of the paper.
\begin{definition} 
Let $A=(L, Act, l_0, E, C)$ be a timed automaton, and $T(A)$ be the TLTS corresponding to $A$. 
\begin{enumerate}
\item \textbf{Timed trace}: A sequence of delays and visible actions $d_1 a_1 d_2 a_2 \dots d_n a_n$
is called a timed trace of the timed automaton $A$ iff there is a sequence of transitions 
$p_0 \to{d_1} p_1 \to{a_1} p_1'\to{d_2}p_2\to{a_2}p_2'\cdots \to{d_n}p_n\to{a_n}p_n'$ in $T(A)$, 
where each $p_i$ and $p_i'$, $0 \le i \le n$
are timed automata states and are of the form $(l, v)$.
$p_0$ is the initial state of the timed automaton.
Given a timed trace $tr = d_1 a_1 d_2 a_2 \dots d_n a_n$,
$untime(tr)$ denotes its projection on $Act$, i.e.
$untime(tr) = a_1 a_2 \dots a_n$.
\item \textbf{Zone and difference bound matrix (DBM)}: A zone $Z$ is a set of valuations $\{v \in \realpos^{|C|} \mid v \models \beta\}$, where 
$\beta$ is of the form $\beta ::= \; x \bowtie k \:|\: x-y \bowtie k \:|\: \beta \wedge \beta$,
$k$ is an integer, $x,y \in C$ and $\bowtie \: \in \: \{\le, <, =, >, \ge\}$.
$Z \!\!\uparrow$ denotes the future of the zone $Z$.
$Z\!\!\uparrow=\{v+d \mid v \in Z, d \geq 0\}$ is the set of all valuations 
reachable from $Z$ by delay transitions.
A zone is a convex set of clock valuations.
A difference bound matrix is a data structure used to
represent zones.
A DBM  \cite{BM1,Dill1} for a set $C = \{x_1, x_2, \dots, x_n\}$ of $n$ clocks is an $(n+1)$ square matrix $\Xi$
where an extra variable $x_0$ is introduced such that the value of $x_0$ is always $0$.
An element $\Xi_{ij}$ is of the form $(m_{ij}, \prec)$ where $\prec \in \{<, \le\}$ 
such that $x_i - x_j \prec m_{ij}$.
Thus an entry $m_{i0}$ denotes the constraint $x_i - x_0 \prec m_{i0}$ which is equivalent to $x_i \prec m_{i0}$.
\item \textbf{Canonical decomposition}: Let $\displaystyle{g =\bigwedge_{i=1}^n \gamma_i \in \mathcal{B}(C)}$, where 
each $\gamma_i$ is an \emph{elementary constraint} of the form $x_i \bowtie k_i$,
such that $x_i \in C$ and $k_i$ is a non-negative integer.
A canonical decomposition of a zone $Z$ with respect to  $g$ 
is obtained by splitting $Z$ into a set of zones $Z_1, \dots, Z_m$ such that 
for each $1 \leq j \leq m$, and $1 \leq i \leq n$, either $\forall v \in Z_j, v \models \gamma_i$ or $\forall v \in Z_j$, $v \not \models \gamma_i$.
For example, consider the zone $Z=x \geq 0 \wedge y \geq 0$ and the guard  $x \le 2 \wedge y > 1$. 
$Z$ is split with respect to $ x \le 2$, and then with respect to $y > 1$, hence 
into four zones : 
$x \le 2 \wedge y \le 1$, $x > 2 \wedge y \le 1$, $x \le 2 \wedge y > 1$ and $x > 2 \wedge y > 1$.
An elementary constraint $x_i \bowtie k_i$ induces the hyperplane $x_i = k_i$ in a zone graph of the timed automaton.
\item \textbf{Zone graph}:
Given a timed automaton $A= (L, Act, l_0, C, E)$, a \emph{zone graph} $\mathcal{G}_A$ of $A$
is a transition system $(S, s_0, Lep, \rightarrow)$, 
that is a finite representation of $T(A)$.
Here $Lep = Act \cup \{\varepsilon\}$.
$S \subseteq L \times \mathcal{Z}$ is the set of nodes of $\mathcal{G}_A$, $\mathcal{Z}$ being the set of zones.
The node $s_0 = (l_0, Z_0)$ is the initial node such that $v_0 \in Z_0$.
$(l_i, Z) \to{a} (l_j, Z')$ iff $l_i \to{g, a, R} l_j$ in $A$ and $Z' \subseteq ([Z \cap g]_{R \leftarrow \overline{0}})\!\!\uparrow$ obtained after canonical decomposition of $([Z \cap g]_{R \leftarrow \overline{0}})$.
For any $Z$ and $Z'$, $(l_i, Z) \to{\varepsilon}(l_i, Z')$ iff there exists a delay $d$ and a valuation $v$ such that $v\in Z$, $v+d \in Z'$ and $(l_i,v)\stackrel{d}{\longrightarrow}(l_i, v+d)$ is in $T(A)$. Here the zone $Z'$ is called a \emph{delay successor} of zone $Z$, while $Z$ is called the \emph{delay predecessor} of $Z'$.
Every node has an $\varepsilon$ transition to itself and the $\varepsilon$ transitions are also transitive.
The relation $\to{\varepsilon}$ is reflexive and transitive and so is the \emph{delay successor} relation.
We denote the set of delay successor zones of $Z$ by $ds(Z)$.
A zone $Z' \neq Z$  is called the \emph{immediate delay successor} of a zone $Z$ iff $Z' \in ds(Z)$ and
$\forall Z'' \in ds(Z): Z'' \neq Z$ and $Z'' \neq Z'$, $Z'' \in ds(Z')$.
We call a zone $Z$ corresponding to a location to be a \emph{base zone} if $Z$ does not have a delay predecessor
other than itself.
For both $a$ and $\varepsilon$ transitions, if $Z$ is a zone then $Z'$ is also a zone, i.e. $Z'$ is a convex set.
\item A hyperplane $x = k$ is said to \emph{bound} a zone $Z$ from above if 
$\exists v \in Z \:[\:\forall d \in \realpos \:[\: (v+d)(x) \succ k \iff (v+d)(x) \notin Z \:]\:]$, where $\succ \in \{>, \ge \}$.
A zone, in general, can be bounded above by several hyperplanes.
A hyperplane $x = k$ is said to \emph{bound} a zone $Z$ fully from above if 
$\forall v \in Z \:[\:\forall d \in \realpos \:[\: (v+d)(x) \succ k \iff (v+d)(x) \notin Z \:]\:]$.
Analogously, we can also say that a hyperplane $x = k$ bounds a zone from below if
$\exists v \in Z \:[\:\forall d \in \realpos \:[\: (v-d)(x) \prec k \iff (v-d)(x) \notin Z \:]\:]$, where $\prec \in \{<, \le \}$.
We can also define a hyperplane bounding a zone fully from below in a similar manner.

When not specified otherwise, in this paper, a hyperplane bounding a zone implies
that it bounds the zone from above.
A zone $Z$ is bounded above if it has an immediate delay successor zone.
\item \textbf{Pre-stability}: A zone $Z_1$ of location $l_1$ is pre-stable with respect to a transition $\eta \in Lep$
leading to a zone $Z_2$ of location $l_2$ if\\
i) for $\eta = a \in Act \: :$ $\forall v_1 \in Z_1, \exists v_2 \in Z_2$ such that $(l_1, v_1) \to{a} (l_2, v_2)$ or
$\forall v_1 \in Z_1, \not\!\exists v_2 \in Z_2$ such that $(l_1, v_1) \to{a} (l_2, v_2)$. \\
ii) for $\eta = \varepsilon \: :$ $\forall v_1 \in Z_1, \exists v_2 \in Z_2, \exists d \in \realpos$ such that $(l_1, v_1) \to{d} (l_2, v_2)$ or
$\forall v_1 \in Z_1$, $\not\!\exists v_2 \in Z_2$ such that $(l_1, v_1) \to{d} (l_2, v_2)$, for any $d \in \realpos$. \\
A zone graph is said to be pre-stable if the zones corresponding to each of the nodes
of the zone graph are pre-stable with respect to their outgoing transitions.
For example zone $Z_1$ in Figure \ref{fig-delayZone} is pre-stable with respect to zone $Z_4$,
however, $Z_1 \cup Z_2 \cup Z_3$ is not pre-stable with respect to zone $Z_4$
since $\exists v \in Z_3$ and hence $v \in Z_1 \cup Z_2 \cup Z_3$ such that for all $d \in \realpos$, $v + d \notin Z_4$.
\item \textbf{corner point and corner point traces}: A corner point\index{corner point} of a zone in the zone graph is a state
where corresponding to each clock $x$, a hyperplane of the form $x = c$ bounding a zone intersects
another hyperplane defining the zone.
Considering $M$ to be the largest constant appearing in the constraints or the location invariants of the timed automaton,
each coordinate of a corner point is of the form $n, n + \delta$ or $n - \delta$ where 
$n \in \{0, 1, \dots, M\}$ and $\delta$ is a symbolic value for an infinitesimally small quantity.
In a zone $Z$, there can be two kinds of corner points, an \emph{entry}\index{corner point!entry} corner point and an \emph{exit}\index{corner point!exit} corner point. For an entry corner point, there is a unique exit corner point, that can be reached from the entry corner point by performing a delay $d$ such that any delay more than $d$ will lead to a state that is in a delay successor zone of $Z$. 

For a zone, it is possible to have a pair of such entry and exit corner points that are not distinct, for example, in Figure \ref{fig-delayZone}, in zone $Z_4$, the corner point ($x = 5 + \delta, y = 7$) is both an entry as well as an exit corner point, while the corner point ($x = 5 + \delta, y = 5+\delta$) is another entry corner point for the same zone $Z_4$ and the corresponding exit corner point is $(x = 7, y = 7)$.
If the entry and the exit corner points are distinct,
then a non-infinitesimally small delay can be performed from the entry corner point
to reach the corresponding exit corner point whereas an infinitesimally small delay from the exit corner point causes it to evolve into an entry corner point of the immediate delay successor zone.
A zone that is not bounded above does not have any exit corner point.

For an entry corner point, the corresponding exit corner point is the \emph{next delay corner point} whereas for an exit corner point, the entry corner point in the immediate delay successor zone is the \emph{next delay corner point}.

Let $c_1$ and $c_2$ be constraints of the form $x \succ k_1$ and $y \prec k_2$ respectively,
where $k_1, k_2 \in \{0,1, \dots, M\}$
and $\succ \:\in \: \{>, \ge \}$ and $\prec \: \in \: \{<, \le \}$.

We consider delays of the form $d_{c_1, c_2}$ and $d_{c_2, c_1}$.
$d_{c_1, c_2}$ is the delay made from an entry corner point to reach an exit corner point
while $d_{c_2, c_1}$ is the delay
made from the exit corner point to reach an entry corner point of a delay successor zone.
For example, the delay $d_{x \ge 5, y < 7}$ denotes a delay
made from a state $p'$ with $p'(x) = 5$ and after making this delay
it evolves to state $p''$ such that $p''(y) = 7 - \delta$.
Delays of the form mentioned above are called \emph{corner point delays}\index{corner point delay} or \emph{cp-delays}.

A corner point trace\index{corner point trace} (cp-trace) is a timed trace ending at a corner point through delays and visible actions such that each delay takes the trace from one corner point to the next delay corner point.
All the delay actions present in a corner point trace are cp-delays.
\end{enumerate}
\end{definition}
We create a zone graph such that for any location $l$, the zones $Z$ and $Z'$ of
any two nodes $(l, Z)$ and $(l, Z')$ in the zone graph are disjoint and all zones of the zone graph are pre-stable.
This zone graph is constructed in two phases in time exponential in the number of the clocks.
The first phase performs a forward analysis of the timed automaton while the second phase
ensures pre-stability in the zone graph.
The forward analysis may cause a zone graph to become infinite \cite{BBFL1}. Several kinds of abstractions
have been proposed in the literature \cite{DT1,BBFL1,BBLP1} to make the zone graph finite.
We use \emph{location dependent maximal constants}
abstraction \cite{BBFL1} in our construction.
In phase 2 of the zone graph creation, the zones are further split to ensure that the resultant zone graph is pre-stable.

Some approaches for preserving convexity and implementing pre-stability have been discussed in \cite{TY1}. 
As an example, consider the timed automaton in Figure \ref{fig-delayZone}. The pre-stable zones of location $l_1$ are shown in the right side of the figure.
In this paper, from now on, unless stated otherwise, \emph{zone graph} will refer to this form of the pre-stable zone graph that is described above.
An algorithmic procedure for the construction of the zone graph is given in \cite{GKNA1}.

A relation $\mathcal{R} \subseteq Q \times Q$ is a \emph{timed simulation} relation if the following conditions hold
for any two timed states $(p, q) \in \mathcal{R}$. \\
$\forall$ $a \in Act$, $p \to{a}p' \implies \exists q'$ : $q \to{a}q'$ and $(p', q') \in \mathcal{R}$ and \\
$\forall d \in \realpos$, $p \to{d}p' \implies \exists q'$ : $q \to{d}q'$ and $(p', q') \in \mathcal{R}$.\\
A \emph{timed bisimulation} relation is a symmetric timed simulation. Two timed automata are timed bisimilar if and only if their initial states are timed bisimilar. Using a product construction on region graphs,
timed bisimilarity for timed automata was shown to be decidable in EXPTIME \cite{Cer1}.
In \cite{WL1}, a product construction on zone graph was used
for deciding timed bisimulation relation for timed automata.
 \begin{figure}[t]
 \centering
 \begin{minipage}[b]{0.46\linewidth}
 \includegraphics[width=0.9\textwidth]{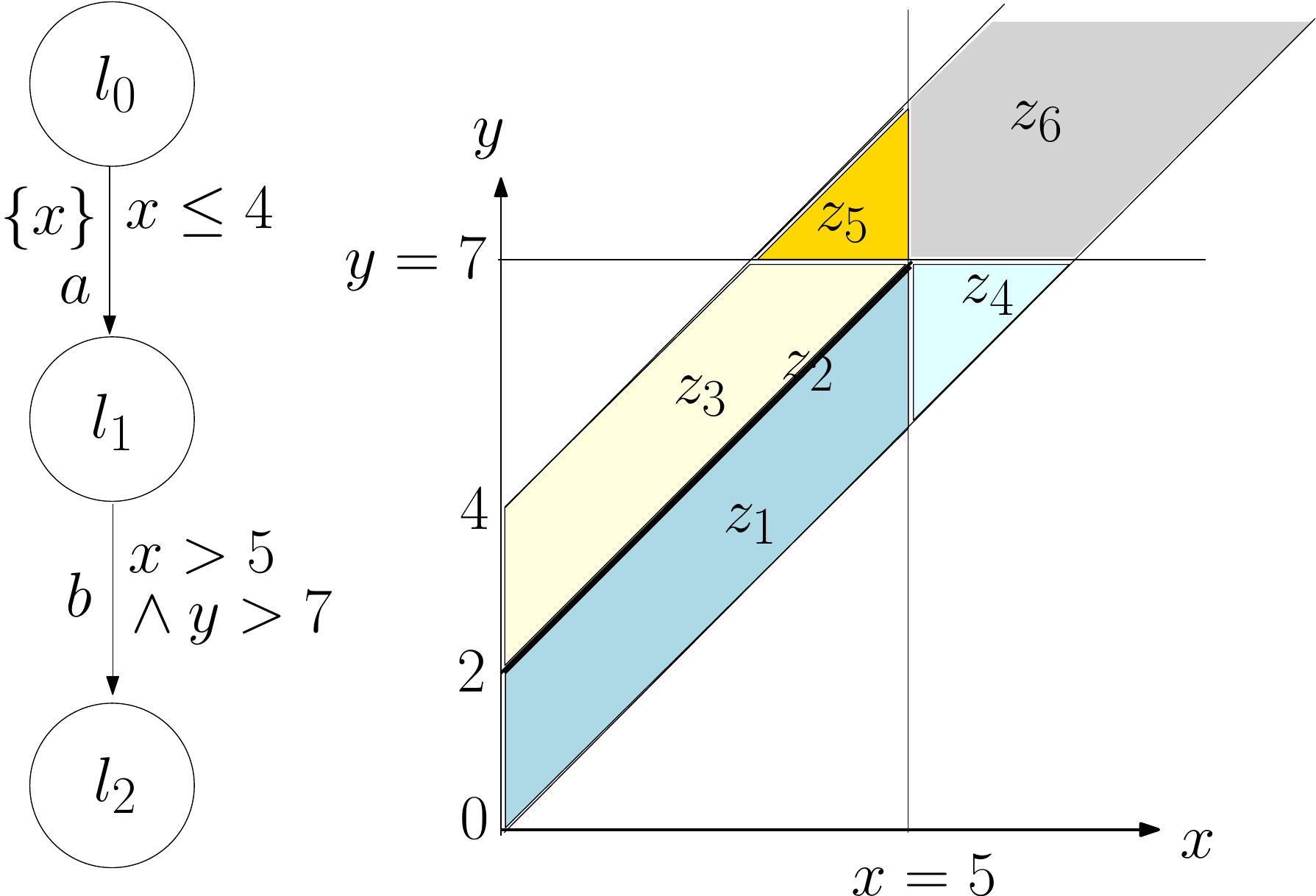}
 \caption{\label{fig-delayZone} a timed automaton and the zones for location $l_1$}
 \end{minipage}
 \begin{minipage}[b]{0.49\linewidth}
 \includegraphics[width=0.95\textwidth]{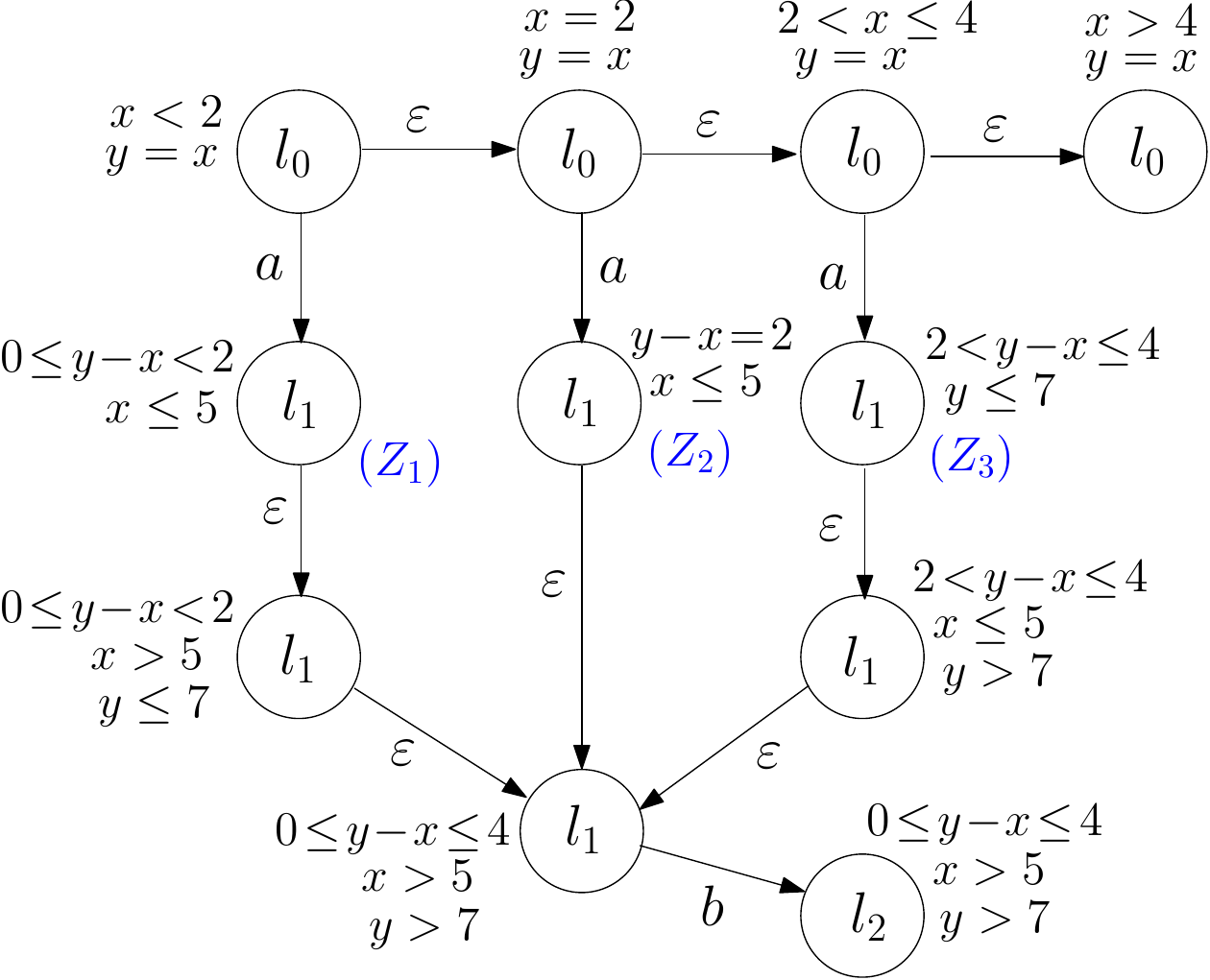}
 \caption{\label{fig-zoneGraph} zone graph for the TA in Figure \ref{fig-delayZone}}
 \end{minipage}
 \end{figure}

\section{Zone Graph}\label{sec-zoneGraph}\index{zone graph}
\begin{algorithm}
\caption{Construction of Zone Graph\newline 
\emph{Input: Timed automaton $A$ \newline
Output: Zone  graph corresponding to $A$}
}\label{algo-zonegraph}
\begin{algorithmic}[1]
{\small
\State Calculate $max_x^l$ for each location $l \in L$ and each clock $x \in C$. \Comment{\textsf{\tiny This is required 
for abstraction to ensure finite number of zones in the zone graph.}}
\State Initialize $Q$ to an empty queue.
\State $Enqueue(Q, \langle l_0, \emptyset \rangle)$. \Comment{\textsf{\tiny Every element is a pair consisting of a location and its parent}}
\State $successors\_added = false$. \Comment{\textsf{\tiny flag set to true whenever  successors of a location are added to $Q$}}
\While {Q not empty}
\State $\langle l, l_p \rangle = dequeue(Q)$
\If {$l_p\neq \emptyset$}, 
\State For the edge $l_p \to{g, a, R'} l$ in $A$, for each existing zone $Z_{l_p}$ of $l_p$, create
the zone $Z = (Z_{l_p} \uparrow \cap \:g_{[R' \leftarrow \overline{0}]})$ of $l$, when $Z \neq \emptyset$.
\State Abstract each of the newly created zones if necessary and for any newly created zone $Z$, for location $l$,
if $\exists Z_1$ of same location such that $Z \cap Z_1 \neq \emptyset$, then merge $Z$ and $Z_1$.
\State Update edges from zones of $l_p$ to zones of $l$ appropriately.
\State If {a new zone of $l$ is added or an existing zone of $l$ is modified} for all successors $l_j$ of $l$,
enqueue $\langle l_j, l \rangle$ to Q.
\State $successors\_added := true$.
\EndIf
\State $new\_zone\_l := true$. \Comment{\textsf{\tiny flag set to false when the canonical decomposition does not produce further zones}}
\While {$new\_zone\_l$}
\State Split the existing zones $Z$ of $l$ based on the canonical decomposition of the guards on the
outgoing edges of $l$ \Comment{\textsf{\tiny It is not always necessary for a split to happen.}}
\State For every zone $Z$ of $l$, consider $Z \uparrow$ and split it further based on the canonical
decomposition of the guards on the outgoing edges of $l$
\Comment{\textsf{\tiny Note that the zones created from this split are convex.}}
\State Abstract each of the newly created zones if necessary and update edges appropriately.
\If {new zones are not created} set $new\_zone\_l$ to $false$.
\EndIf
\EndWhile
\If {any new zones of $l$ are created or any existing zones of $l$ are modified due to the canonical
decomposition of the outgoing edges of $l$ and $successors\_added = false$} 
\State for all the successor locations $l_j$ of $l$ to Q, enqueue $\langle l_j, l \rangle$ to Q.
\EndIf
\EndWhile
\State  /* \textsf{Phase 2 :  In this phase, pre-stability is enforced} */
\State $new\_zone = true$
\While {$new\_zone$}
\State $new\_zone  = false$
\ForAll {edges $l_i \to{g, a, R'} l_j$}
\ForAll {pairs of zones $Z_{lik}$, $Z_{ljm}$ such that $Z_{lik} \to{\alpha}Z_{ljm}$ is an edge in the 
zone graph where $\alpha \in Act \cup \realpos$}
\If {$Z_{lik}$ is not pre-stable with respect to $Z_{ljm}$,}
Split $Z_{lik}$ to make it pre-stable with respect to $Z_{ljm}$.   \Comment{\textsf{\tiny Note that this split still maintains 
convexity of $Z_{lik}$ since the zone is split entirely along an axis that is parallel to the diagonal in 
the $\lvert C\lvert$-dimensional space.}}
\State $new\_zone := true$
\State Update the incoming and the outgoing edges for the newly created zones.
\EndIf
\EndFor
\EndFor
\EndWhile
}
\end{algorithmic}
\end{algorithm}

The detailed algorithm for creating the zone graph has been described in Algorithm \ref{algo-zonegraph}
and consists of two phases, the first one being a forward analysis of the timed automaton while the second phase
ensures pre-stability in the zone graph.
The presence of an edge to a location $l'$ in the timed automaton does not
imply that the the action of the edge can be performed or
location $l'$ is reachable.
For an edge $l \to{g, a, R} l'$, if for all zones $Z$ of $l$,
$Z \cap g = \emptyset$, then action $a$ cannot be performed at $l$
and $l'$ cannot be reached at $l$.
Such edges can be removed from the timed automaton.
For a location $l'$, if the actions corresponding to none of
the incoming edges can be performed, then the location $l'$
is unreachable and it can be removed along with its incoming and
outgoing transitions.

The set of valuations for every 
location is initially split into zones based on the canonical decomposition of its outgoing transition\delete{as discussed in
\cite{TY1}}. 

After phase 2, pre-stability ensures the following: For a node $(l, z)$ in the zone graph, with $v \in z$,
\delete{such that for a $v \models z$,}for a timed trace $tr$,
if $(l, v) \to{tr} (l'', v'')$, with $v'' \in z''$, then 
$\forall v' \; \delete{\models}\mbox{in}\; z$, $\exists tr'.(l, v') \to{tr'} (l'', \tilde{v})$, with
$untime(tr') = untime(tr)$ and $\tilde{v} \in z''$. 
\delete{Here $untime(tr)$ represents the sequence of visible actions in $tr$.}According to the construction given
in algorithm \ref{algo-zonegraph}, for a particular location of the timed automaton, the zones corresponding 
to any two nodes are disjoint. Convexity of the zones and pre-stability property \delete{of the zones}together ensure
that a zone with elapse of time is intercepted by a single hyperplane of the form $x = h$,
where $x \in C$ and $h \in \mathbb{N}$\delete{in the future}.
This is stated formally in Lemma \ref{lem-pre-stability}.
Some approaches for preserving convexity and implementing pre-stability have been discussed \delete{in detail}in \cite{TY1}. 
As an example consider the timed automaton in Figure \ref{fig-delayZone}. The zones corresponding to location $l_1$ as produced through algorithm \ref{algo-zonegraph} are shown in the right side of the figure.

For a state $q \in T(A)$, $\mathcal{N}(q)$ represents the node
 of the zone \delete{valuation}graph with the same location as that of $q$ such that the zone corresponding to $\mathcal{N}(q)$ includes 
 the valuation of $q$. We often say that a state $q = (l, v)$ is in node $s = (l, Z)$ to indicate that $v \in Z$. For two zone \delete{valuation}graphs, 
 $\mathcal{G}_{A_1} = (S_1,s_0,Lep,\rightarrow_1)$, $\mathcal{G}_{A_2} = (S_2,s_0',Lep,\rightarrow_2)$ and a relation
 $\mathcal{R}\subseteq S_1\times S_2$, $\mathcal{G}_{A_1}\:\mathcal{R}\:\mathcal{G}_{A_2}$ iff $(s_p,s_q)\in \mathcal{R}$. 
An $\varepsilon$ transition represents a delay $d \in \mathbb{R}_{\ge 0}$.
\subsection{Creating pre-stable zones}

We describe below a procedure for splitting a zone $Z_1$ into a set $\mathcal{Z}_1$ of zones w.r.t a zone $Z_2$
such that the following properties are satisfied.
\begin{enumerate}
\item Each zone $Z_i$ in $\mathcal{Z}_1$ should either be a subset of $Z_2$ or disjoint from $Z_2$.
\item The DBMs in $\mathcal{Z}_1$ are mutually exclusive and disjoint.
\item The union of the DBMs in $\mathcal{Z}_1$ is $Z_1$.
\end{enumerate}

A naive approach would be to populate $\mathcal{Z}_1$ with two DBMs, 
namely the (set) intersection and (set) difference of $Z_1$ with $Z_2$.
This ensures that the properties 1, 2 and 3 hold. 
However, while the intersection of two DBMs is always a DBM, this cannot be said for their difference. 
Thus, we seek to refine our process by replacing the difference of $Z_1$ and $Z_2$
with some DBMs which are mutually exclusive and which cover the difference of $Z_1$ and $Z_2$,
so that properties 2 and 3 continue to hold.

One naive way to achieve this is to initially populate $\mathcal{Z}_1$ with only $Z_1$,
and then iterate through all the elementary clock constraints $c_{ij}$ of $Z_2$. 
The element $m_{ij}$ in the DBM is a constraint of the form $x_i - x_j < k$ or $x_i - x_j \le k$.
In each iteration, we replace each DBM in $\mathcal{Z}_1$ with two DBMs, namely its intersection and difference with $c_{ij}$.
Note that the complement of an elementary clock constraint is always an elementary clock constraint 
and thus the difference can be calculated by simply taking the intersection 
with the complement of the elementary constraint, which implies that 
both the intersection and the difference considering a single elementary constraint as above are DBMs. 
Of course, some of the DBMs thus produced may be empty, and we do not add these to $\mathcal{Z}_1$. 
The theoretical bound on the number of DBMs in $\mathcal{Z}_1$ at the end is
$2^{\mbox{(number of elementary clock constraints in dbm2)}}$, 
that is, $2^{O((n+1)^2)}$.
The term $(n+1)^2$ comes from the fact that there are $(n+1)^2$ entries in $Z_2$ 
each of which is an elementary clock constraint. 
The above exponential bound is because of the fact that, in the worst case, 
we double the number of DBMs in $\mathcal{Z}_1$ in each iteration.
This leaves us with a set of mutually exclusive zones which satisfies properties 1, 2 and 3.

However, this is inefficient in that it generates too many zones. 
So, we refine the process of iterating through the elementary clock constraints $c_{ij}$.
We start with an empty $\mathcal{Z}_1$ and a pointer which initially points to $Z_1$. 
At each iteration, we evaluate the intersection of $c_{ij}$ and the zone pointed to by the pointer
and also their difference which too is a zone since the difference is calculated
by taking the intersection with the complement of $c_{ij}$.
The pointer is then updated to point to the intersection DBM, 
while the difference DBM is added to $\mathcal{Z}_1$. 
This procedure continues until all the $c_{ij}$ have been seen.
In the end, if the pointer's target is non-empty, we add it to $\mathcal{Z}_1$. 
At this point, the zones in $\mathcal{Z}_1$ still cover $Z_1$ and also satisfy properties 1 and 2. 
This method leads to a maximum of one zone being added to $\mathcal{Z}_1$ for each $c_{ij}$,
and one zone being added at the end, which gives us a maximum of $O((n + 1)^2 + 1)$ zones in $\mathcal{Z}_1$.

Algorithm \ref{algo-splitZone} describes the procedure mentioned above formally.

\begin{algorithm}
\caption{This algorithm returns the intersection and the difference of two zones.\newline 
\emph{Input: Two zones $Z_1$ and $Z_2$ \newline
Output: i) $Z$ which is $Z_1 \cap Z_2$ and ii) $\mathcal{Z}_1$, a set of zones whose union is $Z_1 - Z_2$}
}\label{algo-splitZone}
\begin{algorithmic}[1]
{\small
\State $Z := Z_1 \cap Z_2$
\State $\mathcal{Z}_1 := \emptyset$, $p := Z_1$ \Comment{\textsf{\tiny $p$ is a pointer pointing to a zone.}}
\ForAll {$i \in [0, n]$}
\ForAll {$j \in [0, n]$}
\State Let $c_{ij}$ be the constraint $x_i - x_j \prec m_{ij}$, where $(m_{ij}, \prec)$ is the corresponding entry in the DBM
of $Z_2$.
\State Let $\overline{c_{ij}}$ be the complement of the constraint $c_{ij}$.
\State $Z_{11} := p \cap c_{ij}$, $Z_{12} := p \cap \overline{c_{ij}}$
\If {$Z_{12} \neq \emptyset$}
\State $\mathcal{Z}_1 := \mathcal{Z}_1 \cup Z_{12}$
\EndIf
\If {$Z_{11} \neq \emptyset$}
\State $p := Z_{11}$
\EndIf
\EndFor
\EndFor
\If {$Z_{11} \neq \emptyset$} 
\State $\mathcal{Z}_1 = \mathcal{Z}_1 \cup Z_{11}$
\EndIf
}
\end{algorithmic}
\end{algorithm}

Pre-stability is ensured corresponding to both \emph{delay} and \emph{discrete} transitions in the zone graph.
\begin{itemize}
\item {\bf Pre-stabilizing zones corresponding to delay transitions:} 
Consider two nodes in the zone graph $(l, Z) \to{\varepsilon} (l, Z')$ such that $Z$ is not pre-stable w.r.t $Z'$.
The procedure involves splitting $Z$ w.r.t $Z'\!\!\downarrow$ such that each newly created zones 
obtained by splitting $Z$ is pre-stable w.r.t. $Z'$.

We follow the procedure given in Algorithm \ref{algo-splitZone} to split a zone $Z$ w.r.t the zone $Z'$.
\item {\bf Pre-stabilizing zones corresponding to discrete transitions  in the zone graph:} 
Consider two nodes in the zone graph $(l, Z) \to{a} (l', Z')$ such that $a \in Act$  and $Z$ is not pre-stable w.r.t $Z'$.
The method involving the pre-stabilization operation is similar to the $Pre_e(Z')$ operator used in backward analysis
of the timed automaton, where $e$ is an edge between say location $l$ and $l'$ of the timed automaton and $Z'$ is a zone of $l'$.
Formally for an edge $l \to{g, a, R} l'$ $Pre_e(Z')$ is defined as the following:
\begin{center}
$Pre_e(Z') = \{v \in \realpos^{\lvert C \lvert}\ \lvert \exists v' \in Z', \: \exists t \in \realpos \mbox{ such that }
v + t \models g \mbox{ and }$ \\
$[R\leftarrow 0] (v + t) = v'\}$
\end{center}
Let $\sembrack{R = 0}$ denotes the set of all clock valuations $v$
such that $v(x) = 0$ when $x \in R$ and $v(x) \in \realpos$ when $x \notin R$. $v(x)$ denotes the valuation of clock $x$.
In our case, we define the zone $\widehat{Z'}$ as follows:\\
$\widehat{Z'} = [R \leftarrow 0]^{-1}(Z' \cap \sembrack{R = 0})$.
Now $Z$ is split w.r.t $\widehat{Z'}$ following Algorithm \ref{algo-splitZone} that
makes all the newly created zones pre-stable w.r.t $Z'$.
\end{itemize}

The following lemma states an important property of the zone graph which will further be used for clock reduction.
\begin{lemma} \label{lem-pre-stability}
Pre-stability ensures that
if the zone $Z$ in any node $(l, Z)$ in the zone graph
is bounded above, then it is bounded fully from above
by a hyperplane $x = h$, where $x \in C$ and $h \in \mathbb{N}$.
\end{lemma}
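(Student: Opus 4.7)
The plan is to argue by contradiction, using the fact that delay preserves all clock differences together with the pre-stability property enforced on the zone graph.

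The starting observation is that a delay $d$ transforms a valuation $v$ into $v + d$, so every difference $v(x_i) - v(x_j)$ is invariant. Consequently, among the constraints in the DBM of $Z$, only the single-clock upper bounds of the form $x_i \bowtie h_i$ with $\bowtie \in \{<, \le\}$ can be violated by a positive delay; difference constraints and lower bounds survive untouched. Since $Z$ is bounded above, its DBM must contain a non-empty set of candidate bounding hyperplanes $x_i = h_i$, and these are the only candidates for the one required by the lemma.

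Next I would suppose for contradiction that no such hyperplane bounds $Z$ \emph{fully} from above. Then no single constraint $x_i = h_i$ is the first to be violated by delay for every $v \in Z$, so at least two distinct clocks $x$ and $y$ with bounds $h_1$ and $h_2$ witness the failure: there exist $v_x, v_y \in Z$ such that the delay trajectory of $v_x$ exits $Z$ by crossing $x = h_1$ before $y$ reaches $h_2$, while the trajectory of $v_y$ does the reverse. A small delay just past $v_x$'s exit moment places the resulting valuation in some delay successor zone $Z_x'$ of $Z$ characterised by $x > h_1$ together with $y \le h_2$.

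The crux is the application of pre-stability. Because delay preserves $y - x$, any valuation $v \in Z$ that delay-reaches $Z_x'$ must satisfy $v(y) - v(x) < h_2 - h_1$; but the defining property of $v_y$ gives $v_y(y) - v_y(x) > h_2 - h_1$, so $v_y$ cannot delay-reach $Z_x'$, even though $v_x$ can. This contradicts pre-stability of $Z$ with respect to its outgoing $\varepsilon$-edge into $Z_x'$, which demands that either every valuation of $Z$ delay-reaches $Z_x'$ or none does. Hence the hypothesis fails and a single hyperplane $x = h$ bounds $Z$ fully from above; $h$ is a natural number because every DBM entry produced by Algorithm~\ref{algo-zonegraph} carries an integer bound inherited from the guards of $A$. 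The main obstacle I anticipate is the careful bookkeeping of strict versus non-strict inequalities around the separating hyperplane and the degenerate situation where two candidate upper bounds become tight simultaneously for the same valuation; these boundary subtleties should be absorbed by the symbolic $\delta$-perturbation already used in the paper's corner point formalism, so they do not affect the structure of the argument.
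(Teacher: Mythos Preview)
Your argument is correct and rests on the same geometric observation as the paper: if two hyperplanes $x=h_1$ and $y=h_2$ both bound $Z$ from above, the invariance of $y-x$ under delay means the diagonal $y-x=h_2-h_1$ separates $Z$ into parts whose trajectories enter distinct delay-successor zones, which pre-stability forbids. The paper packages this constructively for two clocks---explicitly exhibiting the split of $Z$ along that diagonal into pieces $Z'$ and $\widehat{Z}$, each fully bounded by a single hyperplane, and then asserting the higher-dimensional case is analogous---whereas you frame the same idea as a contradiction; the content is essentially identical.
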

\begin{proof}
We show the proof for two clocks. The same argument holds for arbitrary number of clocks.
Consider a $\lvert C \lvert$ dimensional zone $Z$.
Since $\lvert C \lvert = 2$, a zone $Z$ that is bounded above is of the form $k_{xy_2} \prec x - y \prec k_{xy_1}$,
$k_{x_1} \prec x \prec k_{x_2}$ and $k_{y_1} \prec y \prec k_{y_2}$,
where $x$ and $y$ are the two clocks, $\prec \in \{<, \le\}$.
Let $Z$ be bounded above by two hyperplanes $x = k_{x_2}$ and  $y = k_{y_2}$.
As can be seen from Figure \ref{fig-pre-proof}, there are two zones $Z_1$ and $Z_2$ such that
$Z_1$ is defined by the inequations $x-y \prec k_{xy_1}$, $k_{x_2} \prec x$ and $y \prec k_{y_2}$,
while $Z_2$ is defined as $k_{xy_2} \prec x-y$, $x \prec k_{x_2}$ and $k_{y_2} \prec y$.
Making $Z$ pre-stable will divide it into two parts:
\begin{itemize}
\item $Z'$ defined by the inequations
$k_{x_2} -  k_{y_2} \prec x - y \prec k_{xy_1}$, $k_{x_1} \prec x \prec k_{x_2}$ and $k_{y_1} \prec y$ and
\item $\widehat{Z}$ define by the inequations
$k_{xy_2} \prec x - y \prec k_{x_2} -  k_{y_2}$, $k_{x_1} \prec x$ and $y \prec k_{y_2}$
\end{itemize}
$Z'$ is bounded fully from above by the hyperplane $k_{x_2}$ while $\widehat{Z}$ is bounded fully from above by the hyperplane $k_{y_2}$.
Note that the inequations defining the zones $Z'$ and $\widehat{Z}$ may vary depending on
the relation among the various constants $k_{xy_1}, k_{xy_2}, k_{x_1}, k_{x_2}, k_{y_1}$ and $k_{y_2}$.
In all cases, pre-stability will ensure that there exists a single hyperplane that fully bounds a zone from above.
\end{proof}
\begin{figure}
\centering
\vspace{-10pt}
\includegraphics[width=0.3\textwidth]{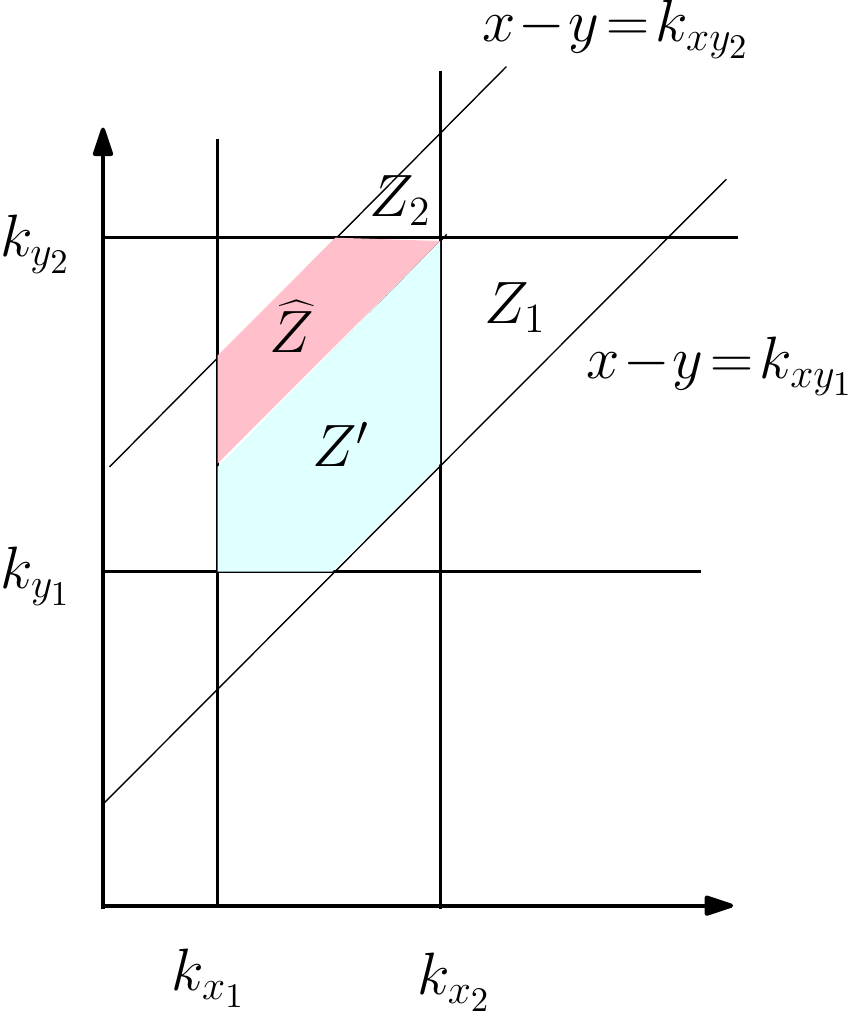}
\caption{\label{fig-pre-proof}For every zone, there exists a hyperplane that bounds it fully from above in a pre-stable zone graph}
\vspace{-10pt}
\end{figure}

\section{Clock Reduction} \label{sec-method}\index{clock reduction}
Unlike the method described in \cite{DY1}, which works on the syntactic structure of the timed automaton, we use a semantic representation as given by the zone graph described in Section \ref{sec-ta} to capture the behaviour of the timed automaton. This helps us to reduce the number of clocks in a more effective way. For a given TA $A$, we first describe a sequence of stages
to construct a TA $A_4$ that is timed bisimilar to $A$.
Later we prove the minimality in terms of the number of clocks for the TA $A_4$.
The operations involved in our procedure use a \emph{difference bound matrix} \index{difference bound matrix (DBM)} representation of the zones.
The following  are important considerations in reducing the number of clocks.
\begin{itemize}
\item There may be some clock constraints on an edge of the TA that are never enabled.
Such edges and constraints may be removed. (Stage 1)
\item Splitting some locations may lead to a reduction in the number of the clocks. (Stage 2)
\item At some location, some clocks whose values may be expressed in terms of other clock values, may be removed. (Stage 2) 
\item Two or more constraints on edges outgoing from a location
when considered collectively may lead to the removal of some constraints. (Stage 3)
\item An efficient way of renaming the clocks across all locations can reduce the total number of clocks further. (Stage 4)
\end{itemize}
Given a TA $A$, we apply the above operations in sequence to obtain the TA $A_4$. These operations are detailed below.

\textbf{Stage 1: Removing unreachable edges and associated constraints:} 
This stage involves creating the pre-stable zone graph of the given timed automaton, as described in Section \ref{sec-ta}.
As mentioned previously in that section, the edges and their associated constraints that are never enabled in an actual transition are removed while creating the zone graph.
Suppose there is an edge $l_i \to{g, a, R} l_j$ in $A$ but in the zone graph, a corresponding transition of the form $(l_i, Z) \to{a} (l_j, Z')$ does not exist.
Line 8 in Algorithm \ref{algo-zonegraph} also illustrates this.
This implies that the transition $l_i \to{g, a, R} l_j$ is never enabled and hence is removed from the timed automaton. Since the edges that do not affect any transition get removed during this stage, we have the following lemma trivially.
\begin{lemma} \label{lem-stage1}
The operations in stage 1 produce a timed automaton $A_1$ that is timed bisimilar to the original TA $A$.
\end{lemma}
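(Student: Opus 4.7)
The plan is to exhibit the identity relation restricted to reachable states as a timed bisimulation between $T(A)$ and $T(A_1)$. Formally, I would define
\[
\mathcal{R} = \{\,((l,v),(l,v)) \mid (l,v) \text{ is reachable from } (l_0, v_0) \text{ in } T(A)\,\}.
\]
Since $A_1$ is obtained from $A$ purely by \emph{deleting} edges, every transition of $T(A_1)$ is a transition of $T(A)$, so the reachable-state set of $T(A_1)$ is contained in that of $T(A)$; it will in fact turn out to be equal.

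The key observation to justify is that Stage~1 only removes edges $l_i \to{g,a,R} l_j$ for which \emph{no} reachable state $(l_i,v)$ with $v \models g$ exists. This is where the correctness of the zone graph construction from Algorithm~\ref{algo-zonegraph} enters: by the standard soundness and completeness of forward reachability analysis, for each location $l_i$ the union $\bigcup_{(l_i,Z) \in S} Z$ is exactly the set of valuations $v$ for which $(l_i,v)$ is reachable in $T(A)$. Consequently, the absence of any zone-graph transition $(l_i,Z) \to{a} (l_j,Z')$ arising from edge $e$ (the condition checked at line~8 and used to justify deletion) means precisely that $e$ is disabled at every reachable state with location $l_i$, so removing $e$ deletes no concrete transition from reachable states of $T(A)$.

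Given this, verifying that $\mathcal{R}$ is a timed bisimulation is immediate. Delay transitions are unaffected by Stage~1, so both directions of the delay condition match trivially. For the discrete condition, if $(l,v) \to{a} (l',v')$ in $T(A)$ via edge $e$, then $e$ is enabled at the reachable state $(l,v)$ and hence was not removed, so the same transition occurs in $T(A_1)$ and leads to the related state $(l',v')$. Conversely, if $(l,v) \to{a} (l',v')$ in $T(A_1)$, the edge is also present in $A$ and the transition is available in $T(A)$. The relation $\mathcal{R}$ is symmetric by construction, and $((l_0,v_0),(l_0,v_0)) \in \mathcal{R}$, so $A$ and $A_1$ are timed bisimilar.

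The only non-routine step is the appeal to completeness of the forward zone-graph construction, namely that a guard $g$ on an edge out of $l_i$ is satisfiable at some reachable state iff the zone graph exhibits at least one outgoing $a$-transition from some node $(l_i,Z)$ via that edge. This is a standard property of zone graphs with location-dependent $M$-abstraction (the abstraction is sound and preserves reachable valuations up to the maximal constants), and it has already been invoked in Section~\ref{sec-zoneGraph}; no further work is needed here beyond citing it.
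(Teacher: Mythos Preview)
Your proposal is correct and follows the same underlying idea as the paper: removed edges are never enabled, so deleting them cannot change behaviour. The paper's own proof is a brief one-liner to this effect, whereas you spell out an explicit bisimulation relation (the identity on reachable states) and identify the needed ingredient---completeness of the forward zone-graph analysis---that the paper leaves implicit; so your version is a more careful rendering of the same argument rather than a different route.
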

\proof
During the construction of the zone graph, an edge is entirely removed if the corresponding transition is never enabled.
Removing those edges corresponding to which no transition takes place does not affect the behaviour of the TA.
Hence after stage 1, the resultant TA remains timed bisimilar to the original TA.
\qed

The time required in this stage is proportional to the size of the zone graph and hence \emph{exponential} in the number of clocks of the timed automaton.

\textbf{Stage 2: Splitting locations and removing constraints not affecting transitions:}\index{clock reduction!splitting locations}
Locations may also require to be split in order to reduce the number of clocks of a timed automaton.
Let us consider the example of the timed automaton in Figure \ref{fig-delayZone} and its zone graph in Figure \ref{fig-zoneGraph}.
There are three base zones corresponding to location $l_1$ in the zone graph, i.e.
$Z_1 = \{0 \le y - x < 2, x \le 5\}$,
$Z_2 = \{y - x = 2, x \le 5\}$ and
$Z_3 = \{2 < y -x \le 4, y \le 7\}$.
This stage splits $l_1$ into three locations $l_{1_1}$, $l_{1_2}$ and $l_{1_3}$ (one for each of the base zones $Z_1,Z_2$ and $Z_3$) as shown in Figure \ref{fig-newLoc}(a).
While the original automaton, in Figure \ref{fig-delayZone}, contains two elementary constraints on the edge between $l_1$ and $l_2$, the modified automaton, in Figure \ref{fig-newLoc}(a), contains only one of these two elementary constraints on the outgoing edges from each of $l_{1_1}$, $l_{1_2}$ and $l_{1_3}$ to $l_2$.
Subsequent stages modify it further to generate an automaton using a single clock as in Figure \ref{fig-newLoc}(b).

Splitting ensures that only those constraints, that are relevant for every valuation in the base zone of a newly created location, appear on the edges originating from that location. Since the clocks can be reused while describing the behaviour from each of the individual locations created after the split, this may lead to a reduction in the number of clocks. 
 \begin{figure}[t]
 \centering
\vspace{-10pt}
 \begin{minipage}[b]{0.48\linewidth}
 \includegraphics[width=0.95\textwidth]{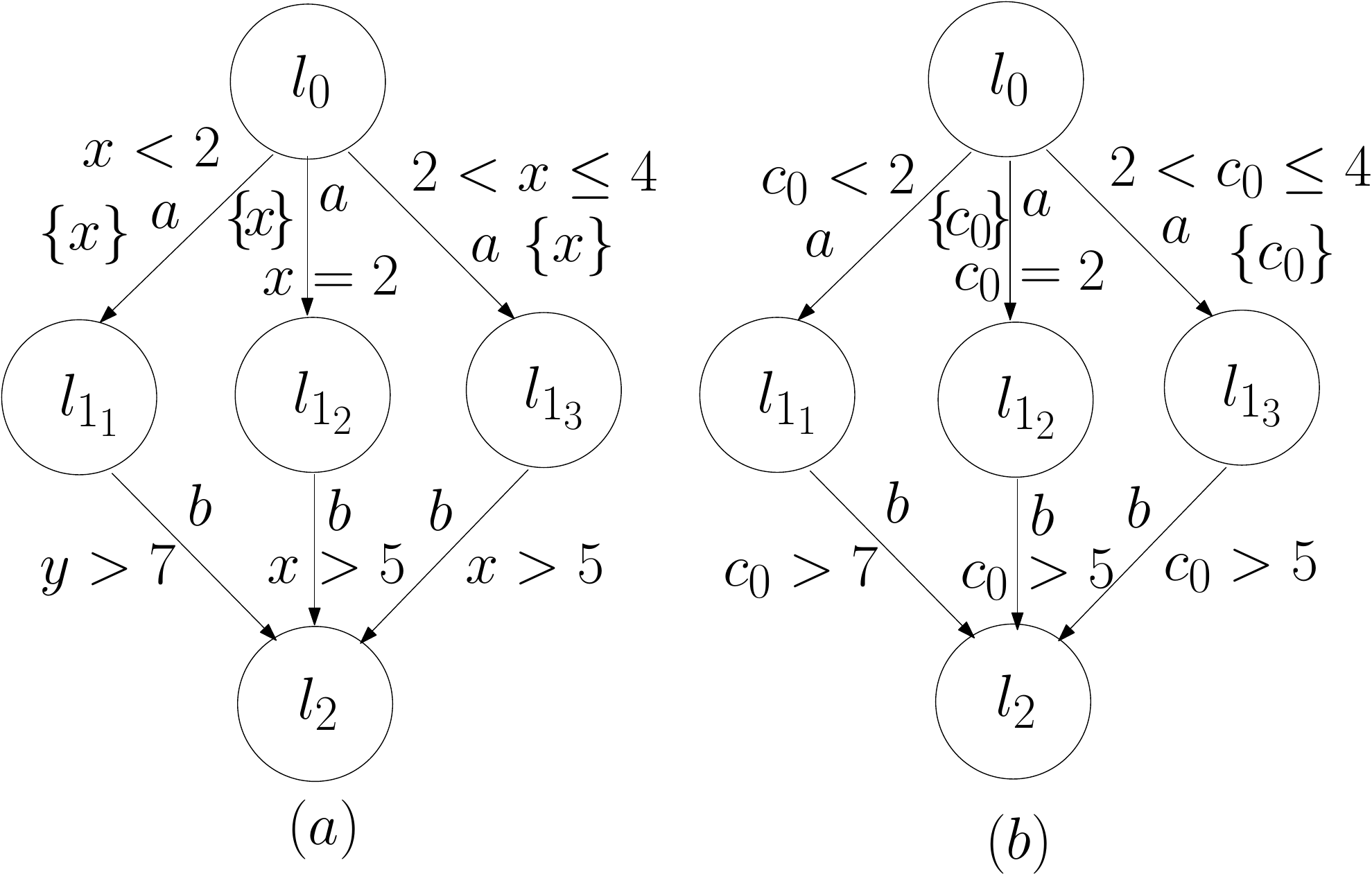}
 \caption{\label{fig-newLoc} Splitting locations of the TA in Figure \ref{fig-delayZone}}
 \end{minipage}
 \quad
 \begin{minipage}[b]{0.48\linewidth}
\includegraphics[width=1.0\textwidth]{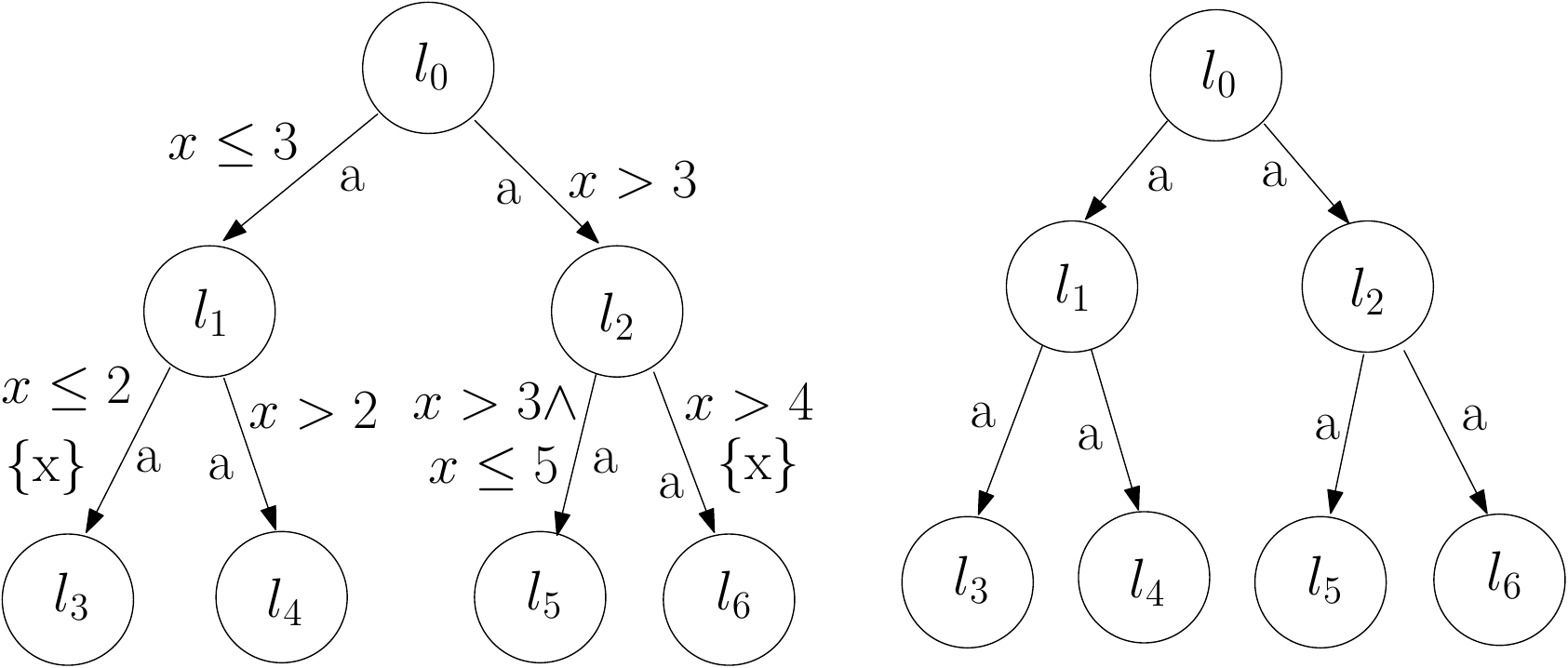}
\caption{\label{fig-stages2} The two timed automata are timed bisimilar}
 \end{minipage}
 \vspace{-10pt}
 \end{figure}
 
We describe a formal procedure for splitting a location into multiple locations in Algorithm \ref{algo-stage2}.
\begin{algorithm}[H]
\caption{Algorithm for splitting locations\newline 
\emph{Input: Timed automaton $A_1$ obtained after stage 1 \newline
Output: Modified TA $A_2$ after applying stage 2 splitting procedures}
}\label{algo-stage2}
\begin{algorithmic}[1]
{\small
\State $A_2$ := $A_1$  \Comment{\textsf{\tiny $A_1$ is the TA obtained from $A$ after the first stage}}
\ForAll {location $l_i$ in $A_1$} \Comment{\textsf{\tiny $i$ is the index of the location}}
\State Create $m$ new locations $l_{i_1}, \dots, l_{i_m}$ in $A_2$ \Comment{\textsf{\tiny  Let $m$ be the number of base zones of $l_i$}}
\State Remove location $l_i$ and all incoming and outgoing edges to and from $l_i$ from $A_2$
\ForAll {$j$ in 1 to $m$}
\ForAll{incoming edge $l_r\to{a, g_r, R_r}l_i$ in $A_1$}
\State \Comment{\textsf{\tiny Split the constraints on the incoming edges to $l_i$ for the newly created locations}}
\State $Z'_{i_j} :=  Z_{i_j} \uparrow \cap \:{\mathbb{R}_{\ge 0}^{|C|}}_{[R_r \leftarrow \overline{0}]}$ \Comment{\textsf{\tiny Let $Z_{i_j}$ be the base zone corresponding to $l_{i_j}$}}
\State \Comment{\textsf{\tiny Let  $Z_{r_j}$ is a zone of location $l_r$ from which there is an $a$ transition to $Z_{i_j}$}}
\State Let $g_{r_j}$ be the weakest formula such that 
$g_r~ \wedge~ free(Z'_{i_j}, R_r)~ \wedge ~Z_{r_j} \Rightarrow g_{r_j}$ and $g_{r_j}$ has a subset of the clocks used in $g_r$.
\Comment{\textsf{\tiny $free(Z_{i_j}',R_r)$ is the zone obtained by removing all constraints on clocks in $R_r$ in $Z_{i_j}'$}}
\State Create an edge $l_r \to{a, g_{r_j}, R_r} l_{i_j}$ in $A_2$
\EndFor

\ForAll {outgoing edge $l_i \to{a, g_{i}, R_i}l_r$ in $A_1$}
\If {$Z_{i_j} \uparrow \cap g_i = \emptyset$} 
\State Do not create this edge from $l_{i_j}$ to $l_r$ in $A_2$ since it is never going to be enabled for any valuation of $Z_{i_j}$;
\Else
\State Let $S_{r}$ be the set of elementary constraints in $g_i$ 
\Loop 
\If {$\exists s'\in S_r,~ s.t.~ Z_{i_j} \uparrow \wedge \:(\bigwedge_{s\in S_r\setminus \{s'\}} s) \Rightarrow Z_{i_j} \uparrow \land \: s'$}
\State $S_r = S_r \setminus \{s'\}$
\Else
\State Create an edge $l_{i_j} \to{a, g_{i'}, R_i} l_r$ in $A_2$, where $g_{i'}=\bigwedge_{s\in S_r}$
\State \textbf{Break};
\EndIf
\EndLoop
\EndIf
\EndFor  
\EndFor 
\EndFor 
}
\end{algorithmic}
\end{algorithm}

Note that a zone can be considered to be a set of constraints defining it.
Similarly a guard can also be considered in terms of the valuations satisfying it.
If there are $m$ base zones in $\mathcal{G}_{A_1}$ corresponding to a location $l_i$ in $A_1$,
then Line 3 and Line 4 essentially split $l_i$ into $m$ locations $l_{i_1},\cdots l_{i_m}$ in the new automaton, say $A_2$.
For each of these newly created locations, Line 6 to Line 11 determine the constraints on their incoming edges.

For each incoming edge $l_r \to{a,g_r,R_r} l_i$, there exists a zone $Z_{r_j}$ of location $l_r$
such that $Z_{r_j}$ has an $a$ transition to $Z_{i_j}$, the $j^{th}$ base zone of $l_i$.
Line 8 calculates the lower bounds of $Z_{i_j}$ by resetting the clocks $R_r$
in the intersection of $Z_{i_j} \!\!\uparrow$ with ${\mathbb{R}_{\ge 0}^{|C|}}$.
In Line 11, $free(Z_{i_j}',R_r)$ represents a zone which is obtained by removing all constraints on clocks in $R_r$ in $Z_{i_j}'$.
Further, $g_{r_j}$ is calculated as the weakest guard that is simultaneously satisfied
by the constraints $g_r$, $Z_{r_j}$ and $free(Z_{i_j}',R_r)$  and has a subset of the clocks appearing in $g_r$. 
For our running example in Figure \ref{fig-delayZone}, if we consider $Z_{i_j}=Z_1$ then 
we have $Z_{i_j}' = Z_{i_j}\!\!\uparrow \cap~ \mathbb{R}^{\{x, y\}}_{[x \leftarrow \bar{0}]} = \{x = 0, y < 2\}$, $Z_{r_j}=\{x=y, x<2\}$ and $free(Z_{i_j}', \{x\}) = \{x \ge 0, y < 2\}$. We can see that $x< 2$ is the weakest formula such that $x\le 4 \land x \ge 0 \land y<2 \land x=y \land x<2 \Rightarrow x < 2$ holds and hence $g_{r_j}=\{x < 2\}$.

The loop from Line 14 to Line 28 determines the constraints on the outgoing edges from these new locations.
Line 15 checks if the zone $Z_{i_j}\!\!\uparrow$ has any valuation that
satisfies the guard $g_i$ on an outgoing edge from location $l_i$.
If no satisfying valuation exists then this transition will never be enabled from $l_{i_j}$ and hence this edge is not added in $A_2$. Loop from Line 19 to Line 26 checks if some elementary constraints of the guard are implied by other elementary constraints of the same guard. If it happens then we can remove those elementary constraints from the guard
that are implied by the other elementary constraints.

For our running example, the modified automaton of Figure \ref{fig-newLoc}(a) does not contain the constraint $x>5$ on the edge from $l_{l_1}$ to $l_2$ even though it was present on the edge from $l_1$ to $l_2$. The reason being that the future of the zone of $l_{l_1}$ (that is $0 \le y-x<2$) along with the constraint $y>7$ implies $x>5$ hence we do not need to put $x>5$ explicitly on the outgoing edge from $l_{1_1}$ to $l_2$.
Removal of such elementary constraints helps in reducing the number of clocks in future stages.
The maximum number of locations produced in the timed automaton as a result of the split is bounded by the number of zones in the zone graph. This is exponential in the number of clocks of the original TA $A$.
However, we note that the base zones of a location $l$ in the original TA are distributed
across multiple locations as a result of the split of $l$
and no new valuations are created. This gives us the following lemma.
\begin{lemma} \label{lem-split}
The splitting procedure described in this stage does not increase the number of clocks
in $A_2$, but the number of locations in $A_2$ may become exponential in the
number of clocks of the given TA $A$.
However, there is no addition of new valuations
to the underlying state space of the original TA $A$ and corresponding to every state
$(l	, v)$ of a location $l$ in the TLTS of the original TA $A$,
exactly one state $(l_i, v)$ is created in the TLTS of the modified TA $A_2$,
where $l_i$ is one of the newly created locations as a result ofhttp://www.zalafilms.com splitting $l$.
\end{lemma}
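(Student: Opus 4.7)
The plan is to prove the three assertions of the lemma separately: preservation of the clock set, the possible exponential blow-up in locations, and the one-to-one correspondence of states in the underlying TLTS.

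First I would argue that no new clock is ever introduced. Inspecting Algorithm \ref{algo-stage2}, every guard constructed on an edge of $A_2$ (either $g_{r_j}$ from Line~11 or $g_{i'}$ from Line~23) is a logical weakening/subset of an existing guard of $A_1$ and the reset set $R_r$ (resp.\ $R_i$) is copied verbatim from the corresponding edge of $A_1$. Since the clocks of $A_2$ are exactly those that appear on some edge, and every such clock already appears in $A_1$, we get $|C_{A_2}| \le |C_{A_1}| = |C_A|$.

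Second, I would bound the number of locations by the number of base zones. Since each location $l_i$ of $A_1$ is replaced by one location per base zone of $l_i$ in the zone graph $\mathcal{G}_{A_1}$, the total location count of $A_2$ is bounded by the total number of base zones across $\mathcal{G}_{A_1}$. By the construction in Section~\ref{sec-zoneGraph}, this is at most the number of nodes of $\mathcal{G}_{A_1}$, which is exponential in $|C|$. This gives the stated exponential upper bound on locations, and an explicit example witnessing exponential blow-up can be exhibited (e.g.\ a single-location automaton whose canonical decomposition produces $2^{|C|}$ zones).

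Third, and most delicately, I would establish the one-to-one correspondence on states. The key invariant to isolate is that the base zones $Z_{i_1},\ldots,Z_{i_m}$ of a location $l_i$ form a partition of the set of reachable valuations at $l_i$: pre-stability plus the disjointness enforced in Algorithm~\ref{algo-zonegraph} (``for a particular location, the zones corresponding to any two nodes are disjoint'') ensures that every reachable valuation $v$ at $l_i$ lies in a unique base zone (or can be delay-rewound to a unique one). Given this, for every reachable state $(l_i,v)$ of $T(A)$ there is a unique index $j$ with $v \in Z_{i_j}\!\!\uparrow$ (and thus a unique delay-rewound $v'$ in $Z_{i_j}$), and we map $(l_i,v)$ to $(l_{i_j},v)$ in $T(A_2)$. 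Conversely, any state of $T(A_2)$ is of the form $(l_{i_j},v)$ with $l_{i_j}$ one of the split copies of some $l_i$, and maps back to $(l_i,v)$. No new valuations are added because no new clocks exist (by part one) and the guards/resets are weakenings of existing ones, hence every reachable valuation in $A_2$ is already a reachable valuation in $A_1$.

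The main obstacle I anticipate is the third point: one must rule out both the creation of spurious valuations (e.g.\ because the weakened guard $g_{r_j}$ might admit extra clock vectors) and the duplication of existing valuations across several split copies. Ruling these out hinges on the partition property of the base zones and on the fact that in Line~11 the conjunction $g_r \wedge \mathit{free}(Z'_{i_j},R_r) \wedge Z_{r_j}$ is used to filter $g_{r_j}$, which guarantees that only those source valuations that actually lead into $Z_{i_j}$ are admitted on the edge into $l_{i_j}$. Once this is verified, the bisimulation-preserving one-to-one correspondence follows, and bisimilarity of $A_2$ with $A_1$ (which is not part of this lemma but is the natural next step) will reduce to checking that transitions are preserved under this bijection.
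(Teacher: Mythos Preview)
The paper does not give a formal proof of this lemma at all: it is stated immediately after the sentence ``However, we note that the base zones of a location $l$ in the original TA are distributed across multiple locations as a result of the split of $l$ and no new valuations are created. This gives us the following lemma.'' and no \texttt{\textbackslash proof} environment follows. Your three-part decomposition is therefore strictly more than what the paper offers, and each part lines up with the informal justification the paper gestures at (location count bounded by zone-graph size; base zones partition the reachable valuations of $l$).

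One remark on your third part: the sentence ``no new valuations are added because \ldots\ the guards/resets are weakenings of existing ones'' is the wrong direction on its face---a weaker guard admits \emph{more} source valuations, not fewer. You do catch this yourself in the final paragraph, where you point to the conjunct $Z_{r_j}$ in Line~11 as the filter that prevents spurious incoming transitions. That is the correct fix, but you should make it the primary argument rather than an afterthought: the reason no new valuation becomes reachable at $l_{i_j}$ is that any transition into $l_{i_j}$ in $A_2$ originates from a state in $Z_{r_j}$ (by pre-stability of the zone graph of $A_1$, only states in $Z_{r_j}$ had an $a$-transition landing in $Z_{i_j}$ in $A_1$), and on such states $g_{r_j}$ coincides with $g_r \wedge \mathit{free}(Z'_{i_j},R_r)$. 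The weakening of $g_r$ to $g_{r_j}$ is harmless precisely because it is only ever evaluated on valuations already in $Z_{r_j}$. With that reordering your argument is sound and considerably more careful than the paper's one-line justification.
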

Splitting locations and removing constraints as described above do not alter the behaviour of the timed automaton that leads us to the following lemma.
\begin{lemma} \label{lem-stage2}
The operations in stage 2 produce a timed automaton $A_2$ that is timed bisimilar to the TA $A_1$ obtained at the end of stage 1.
\end{lemma}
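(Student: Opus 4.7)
The plan is to exhibit a candidate timed bisimulation relation between $T(A_1)$ and $T(A_2)$ and verify the transfer conditions for both delay and discrete transitions. A natural candidate is
\[\mathcal{R} \;=\; \{\,((l_i, v),(l_{i_j}, v)) \mid l_{i_j} \text{ is a split of } l_i \text{ in } A_2,\ v \in Z_{i_j}\!\!\uparrow\,\},\]
where $Z_{i_j}$ denotes the base zone of $l_i$ associated with the split location $l_{i_j}$. By Lemma~\ref{lem-split}, every reachable $(l_i, v)$ in $T(A_1)$ corresponds to exactly one reachable $(l_{i_j}, v)$ in $T(A_2)$, so $\mathcal{R}$ is in fact a bijection between reachable states, and its symmetric closure is the candidate bisimulation. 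The initial pair $((l_0, v_0),(l_{0_k}, v_0))$ lies in $\mathcal{R}$ for the split $l_{0_k}$ whose base zone contains $v_0$.

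For delay transitions, a transition $(l_i, v) \to{d} (l_i, v+d)$ in $A_1$ is matched by $(l_{i_j}, v) \to{d} (l_{i_j}, v+d)$ in $A_2$. Since $Z_{i_j}\!\!\uparrow$ is closed under the addition of non-negative delays, $v+d \in Z_{i_j}\!\!\uparrow$ and the resulting pair is again in $\mathcal{R}$. The converse direction is identical, because splitting never introduces new delay behaviour within a location.

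For discrete transitions in the forward direction, suppose $(l_i, v) \to{a} (l_r, v')$ in $A_1$ via edge $l_i \to{g_i, a, R_i} l_r$. Lines 14--26 of Algorithm~\ref{algo-stage2} place a corresponding edge $l_{i_j} \to{g_{i'}, a, R_i} l_{r_k}$ in $A_2$, where $g_{i'}$ is obtained from $g_i$ by dropping only those elementary constraints that are implied by the remaining ones in conjunction with $Z_{i_j}\!\!\uparrow$. Since $v \in Z_{i_j}\!\!\uparrow$ and $v \models g_i$, we have $v \models g_{i'}$; furthermore, the reset image $v' = v[R_i \leftarrow \overline{0}]$ lies in the unique base zone $Z_{r_k}$ of $l_r$ that determines the target split, so $((l_r, v'),(l_{r_k}, v')) \in \mathcal{R}$. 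For the converse, firing $l_{i_j} \to{g_{i'}, a, R_i} l_{r_k}$ in $A_2$ requires $v \in Z_{i_j}\!\!\uparrow$ and $v \models g_{i'}$, which together reinstate the dropped elementary constraints, yielding $v \models g_i$ so that the original $A_1$ edge fires. The incoming-edge case uses lines 6--12 analogously: the guard $g_{r_j}$ is chosen so that on any source zone $Z_{r_k}$, the valuations satisfying $g_{r_j}$ are exactly those whose reset image under $R_r$ lands in the base zone $Z_{i_j}$ of the chosen target split.

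The main obstacle is the bookkeeping in the discrete-transition case: one has to check that the simplified outgoing guards enable no spurious transitions when restricted to $Z_{i_j}\!\!\uparrow$, and symmetrically that the weakened incoming guards $g_{r_j}$, together with the per-target choice of split location, partition the reset images into exactly the correct destination base zones so that neither over- nor under-matching occurs. Once these soundness/completeness observations are in place, the two matching diagrams commute step by step and $\mathcal{R}$ is a timed bisimulation; hence $A_2$ is timed bisimilar to $A_1$.
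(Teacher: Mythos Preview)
Your proof is correct and follows essentially the same approach as the paper: both pair each state $(l_i,v)$ of $A_1$ with the state $(l_{i_j},v)$ of $A_2$ where $l_{i_j}$ is the split copy whose base zone's future contains $v$, and both rely on the fact that the guard simplifications in Algorithm~\ref{algo-stage2} only drop constraints that are implied on $Z_{i_j}\!\!\uparrow$. Your version is considerably more explicit than the paper's---you name the relation $\mathcal{R}$ and check the transfer conditions for delay and discrete transitions in both directions---whereas the paper's proof simply asserts that splitting and removing implied constraints do not alter the behaviour of any state $(l,v_i)$.
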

\proof
In the second stage, a location $l$ is split such that corresponding to
every base zone of $l$, new locations
$l_1, \dots, l_n$ are created.
Consider valuations $v_1, \dots, v_n$  in $l_1, \dots, l_n$ respectively.
Since the zones of $l$ are distributed over locations $l_1, \dots, l_n$,
before the split, all the valuations $v_1, \dots, v_n$ existed in $l$.
For all $v_i$, $1 \le i \le n$, $(l, v_i)$ (before $l$ is split) is
timed bisimilar to $(l_i, v_i)$ after the split.
This is because removal of constraints that are implied by other constraints and
splitting a location does not change the behaviour of a state $(l, v_i)$.
Hence the timed automaton after stage 2 remains timed bisimilar to the TA $A_1$.
\qed

The number of locations after the split increases exponentially
in the worst case in the number of the clocks.
The constraints on the incoming edges of $l$ are also split appropriately
into constraints on the incoming edges of the newly created locations.
Thus these operations require exploring the zones corresponding to location $l_i$
and the incoming and the outgoing edges of $l_i$ and their constraints on these edges.
Hence this stage too runs in time that is \emph{exponential} in the number of the clocks of the timed automaton.

\textbf{Stage 3: Removing constraints by considering multiple edges with the same action:}\index{clock reduction!merging guards}
We consider the example in Figure \ref{fig-stages2}. Note that the constraints $x \le 3$ and $x > 3$ on the edges from $l_0$ to $l_1$ and from $l_0$ to $l_2$ respectively could as well be merged together to produce a constraint without any clock.

For every action $a$ enabled at any location $l$, this stage checks whether
a guard enabling that action at $l$ can be merged with another guard enabling 
the same action at that location such that timed bisimilarity is preserved.
The transformation made in this stage has been formally described in Algorithm \ref{algo-stage3}.
The input to this algorithm is the TA obtained after stage 2, say $A_2$.
For each location $l_i$, the algorithm does the following: for every action $a \in Act$, it determines the zones of $l_i$
from which action $a$ is enabled. We call this set ${\mathcal{Z}_i}_{a}$. Zone graph construction and
splitting of locations in stage 2 ensures that 
all zones in ${\mathcal{Z}_i}_a$ form a linear chain connected by $\varepsilon$ edges as shown in Figure \ref{fig-zoneConstraints}. We use $\before$ to capture this total ordering relation.
Let $Z_{i_1}\before \cdots \before Z_{i_{k}} \before Z_{i_{k+1}} \cdots \before Z_{i_m}$
be the zones in this chain.
Lemma \ref{lem-pre-stability} ensures that for each $Z_{i_k}$, $k\ge 1$, that is bounded above,
there exists a hyperplane that bounds the zone fully from above
and similarly, for each $Z_{i_k}$, $k>1$
there exists a hyperplane that bounds the zone fully from below.
For a zone $Z$, let $\lowerbound{Z}$ and $\upperbound{Z}$ denote these lower and upper bounding hyperplanes of $Z$ respectively.
Further, $\upperbound{Z}$ is $\infty$ if $Z$ is not bounded from above.

Let $\setg{l_i}{a}=\{g\mid l_i\to{a,g,R} l'\in E_{A_2}\}$ be the set of guards on the outgoing edges from $l_i$ in $A_2$ which are labelled with $a$. For any $g\in \setg{l_i}{a}$, we define the following:
\begin{itemize}
\item $\starts{g}{(l_i,a)}= Z \in {\mathcal{Z}_i}_a$ is the zone in ${\mathcal{Z}_i}_a$ which is bounded from below by the same constraints as the lower bound of the constraints in $g$. 
\item $\send{g}{(l_i,a)}=Z \in {\mathcal{Z}_i}_a$ is the zone in ${\mathcal{Z}_i}_a$ which is bounded from above by the same constraints as the upper bound of the constraints in $g$.
If the constraints in $g$ do not have any upper bound,
then $\send{g}{(l_i,a)}$ is the last zone in the chain ${\mathcal{Z}_i}_a$.
\item $\range{g}{(l_i,a)}=\{Z \in {\mathcal{Z}_i}_a \mid \starts{g}{(l_i,a)} \before Z \land Z \before \send{g}{(l_i,a)}\}
\cup \{\starts{g}{(l_i,a)}\} \cup \{\send{g}{(l_i,a)}\}$ 
is the set of zones ordered by $\before$ relation in between $\starts{g}{(l_i,a)}$ and $\send{g}{(l_i,a)}$.
\end{itemize}
In Algorithm \ref{algo-stage3}, we use a rather informal notation $g:=[C1,C2]$ to denote that $C1$ and $C2$ are the constraints defining the lower and the upper bounds of $g$ respectively.
If $g$ does not have any constraint defining the upper bound then $C_2 = \infty$.
We define a total order $\beforeg$ on $\setg{l_i}{a}$ such that 
for any $g,g' \in \setg{l_i}{a}$, $g \beforeg g'$ iff $\exists Z \in \range{g}{(l_i,a)}$ such that $Z \before Z'$ for all $Z' \in \range{g'}{(l_i,a)}$.
Similar to the zones let us use ordered indexed variable $g_{i_1},\dots, g_{i_p}$ to denote $g_{i_1}\beforeg \cdots g_{i_{k}} \beforeg g_{i_{k+1}} \cdots \beforeg g_{i_{p}}$. One such total order on guards is shown in Figure \ref{fig-zoneConstraints}.
The loop from Line 5 to Line 33 in Algorithm \ref{algo-stage3} traverses the elements of $\setg{l_i}{a}$ in this total order with the help of a variable $next$ initialized to 2. 
In every iteration of this loop, the invariant $g_{curr} \beforeg g_{i_{next}}$ holds. Three possibilities exist based on whether the set union of zones corresponding to these guards is (i) not convex (ii) convex but non-overlapping, or (iii) convex as well as overlapping.

If the union is non-convex then both $g_{curr}$ and the index $next$ are changed in Line 7 to pick the next ordered pair in this order.
For cases (ii) and (iii), new guards are created by merging corresponding zones as long as the modified automaton preserves timed bisimilarity.
If timed bisimilarity is preserved then the modified automaton $A'$ is set as the current automaton which is $A_3$ (Line 13 and Line 28)
and $next$ is incremented to process the next guard.
Otherwise the guard $g_{curr}$ is set to $g_{i_{next}}$ and $next$ is incremented by 1 (Line 15 and Line 33).
The only difference in these two cases is in creating the new guard. 

For case (ii), convex but non-overlapping zones, a new guard is created from the lower bound of $\starts{g_{curr}}{(l_i,a)}$ and the upper bound of $\send{g_{i_{next}}}{(l_i,a)}$.
For case (iii), there are three possibilities of combining guards, mentioned in Line 19, Line 21 and Line 23.
The first possibility is the same as in case (ii). The second and the third possibilities are replacing the upper bound of $g_{curr}$ with the lower bound of $\starts{g_{i_{next}}}{(l_i,a)}$ and the lower bound of $g_{i_{next}}$ with the upper bound of $\send{g_{curr}}{(l_i,a)}$ respectively.

A zone graph captures the behaviour of the timed automaton and hence timed bisimilarity between two TAs can be checked using their zone graphs \cite{WL1,GKNA1}. This is why we create the pre-stable zone graph as described in Section \ref{sec-ta} as it enables one to directly check timed bisimilarity on this zone graph created through Algorithm \ref{algo-zonegraph} following the procedure described in \cite{GKNA1}.
 
\begin{algorithm}
\caption{Algorithm for stage 3\newline 
\emph{Input: Timed automaton $A_2$ obtained after stage 2 \newline
Output: Modified TA $A_3$ after applying stage 3 procedures}
}\label{algo-stage3}
\begin{algorithmic}[1]
{\small
\State $A_3$ := $A_2$  \Comment{\textsf{\tiny $A_2$ is the TA obtained from $A$ after the first two stages}}
\ForAll {location $l_i$ in $A_2$} \Comment{\textsf{\tiny $i$ is the index of the location, the set of locations do not change in this stage}}
\ForAll {$a \in sort(l_i)$}  \Comment{\textsf{\tiny $sort(l_i)$ is the set of actions in $l_i$ that can be performed from $l_i$}}
\State $g_{curr}:=g_{i_1}$, $next:=2$
\While {$next < \lvert {\setg{l_i}{a}} \lvert-1$}
\If {$\range{g_{curr}}{(l_i,a)} \cup \range{g_{i_{next}}}{(l_i,a)}$ is not convex} 
  \State $g_{curr}:=g_{i_{next}}$, $next:=next+1$
\ElsIf {$\range{g_{i_{next}}}{(l_i,a)} \cap \range{g_{curr}}{(l_i,a)} = \emptyset$} \Comment{\textsf{\tiny non-overlapping but contiguous}}  
  \State $g_{curr}':=[\lowerbound{\starts{g_{curr}}{(l_i,a)}},\upperbound{\send{g_{i_{next}}}{(l_i,a)}}]$
  \State $g_{i_{next}}':=g_{curr}'$ 
  \State Let $A'$ be the TA obtained by replacing all occurrences of $g_{curr}$ and $g_{i_{next}}$ with $g_{curr}'$ and $g_{i_{next}}'$ respectively in $A_3$
  \If {$A'$ is timed bisimilar to $A_3$}
    \State $A_3:=A'$, $g_{curr}:=g_{i_{next}}'$, $next:=next+1$
  \Else 
    \State $g_{curr}:=g_{i_{next}}$, $next:=next+1$
  \EndIf
\Else \Comment{\textsf{\tiny{$\range{g_{curr}}{(l_i,a)}$ and $\range{g_{i_{next}}}{(l_i,a)}$ have overlapping zones}}}
  \State \Comment{\textsf{\tiny{There are three ways to combine $g_{curr}$ and $g_{i_{next}}$, and}}}
  \State \Comment{\textsf{\tiny{Resultant new guards should be checked for timed bisimilarity in the following order}}}
  \State (i). $g_{curr}':=[\lowerbound{\starts{g_{curr}}{(l_i,a)}}, \upperbound{\send{g_{i_{next}}}{(l_i,a)}}]$, \\
  $\qquad \qquad \qquad \qquad g_{i_{next}}':=g_{curr}'$ 
  \State (ii). $g_{curr}':=[\lowerbound{\starts{g_{curr}}{(l_i,a)}}, \lowerbound{\starts{g_{i_{next}}}{(l_i,a)}}]$, \\
  $\qquad \qquad \qquad \qquad g_{i_{next}}':=[\lowerbound{\starts{g_{i_{next}}}{(l_i,a)}},\upperbound{\send{g_{i_{next}}}{(l_i,a)}}]$ 
  \State (iii). $g_{curr}':=[\lowerbound{\starts{g_{curr}}{(l_i,a)}},\upperbound{\send{g_{curr}}{(l_i,a)}}]$, \\
  $\qquad \qquad \qquad \qquad g_{i_{next}}':=[\upperbound{\send{g_{curr}}{(l_i,a)}}, \upperbound{\send{g_{i_{next}}}{(l_i,a)}}]$ 
  \While {$1\le i \le 3$}  \Comment{\textsf{\tiny Corresponding to the three cases above}}
    \State Let $A'$ be the TA obtained by replacing all occurrences of $g_{curr}$ and $g_{i_{next}}$ with the $i^{th}$ $g_{curr}'$ and $g_{i_{next}}'$ respectively in $A_3$
    \If {$A'$ is timed bisimilar to $A_3$}
      \State $A_3:=A'$, $g_{curr}:=g_{i_{next}}'$, $next:=next+1$
      \State \textbf{Break}
    \Else 
      \State $i:=i+1$
    \EndIf
  \EndWhile
  \If {$i=4$} \Comment{\textsf{\tiny{Bisimilarity could not be preserved in any of these three cases}}}
       \State $g_{curr}:=g_{i_{next}}$, $next:=next+1$
  \EndIf
\EndIf
\EndWhile
\EndFor  
\EndFor 
}
\end{algorithmic}
\end{algorithm}

\begin{figure}[t]
\centering
\begin{minipage}[b]{0.55\linewidth}
\includegraphics[width=0.98\textwidth]{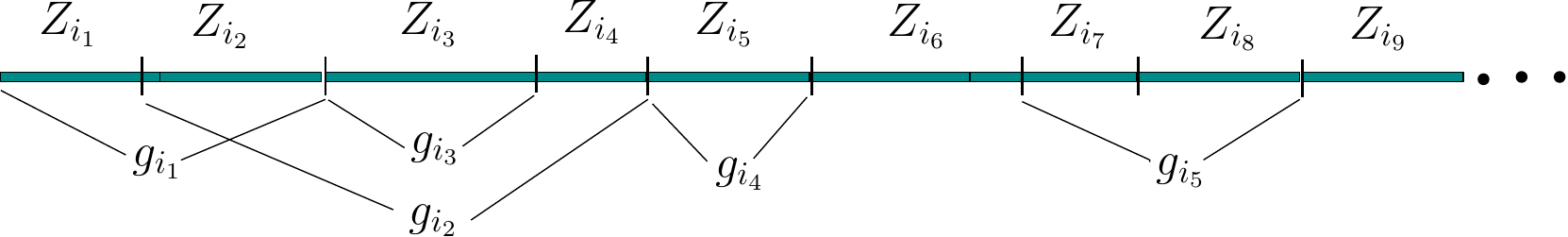}
\caption{\label{fig-zoneConstraints} Removing constraints in stage 3}
\end{minipage}
\quad
\begin{minipage}[b]{0.4\linewidth}
\includegraphics[width=0.95\textwidth]{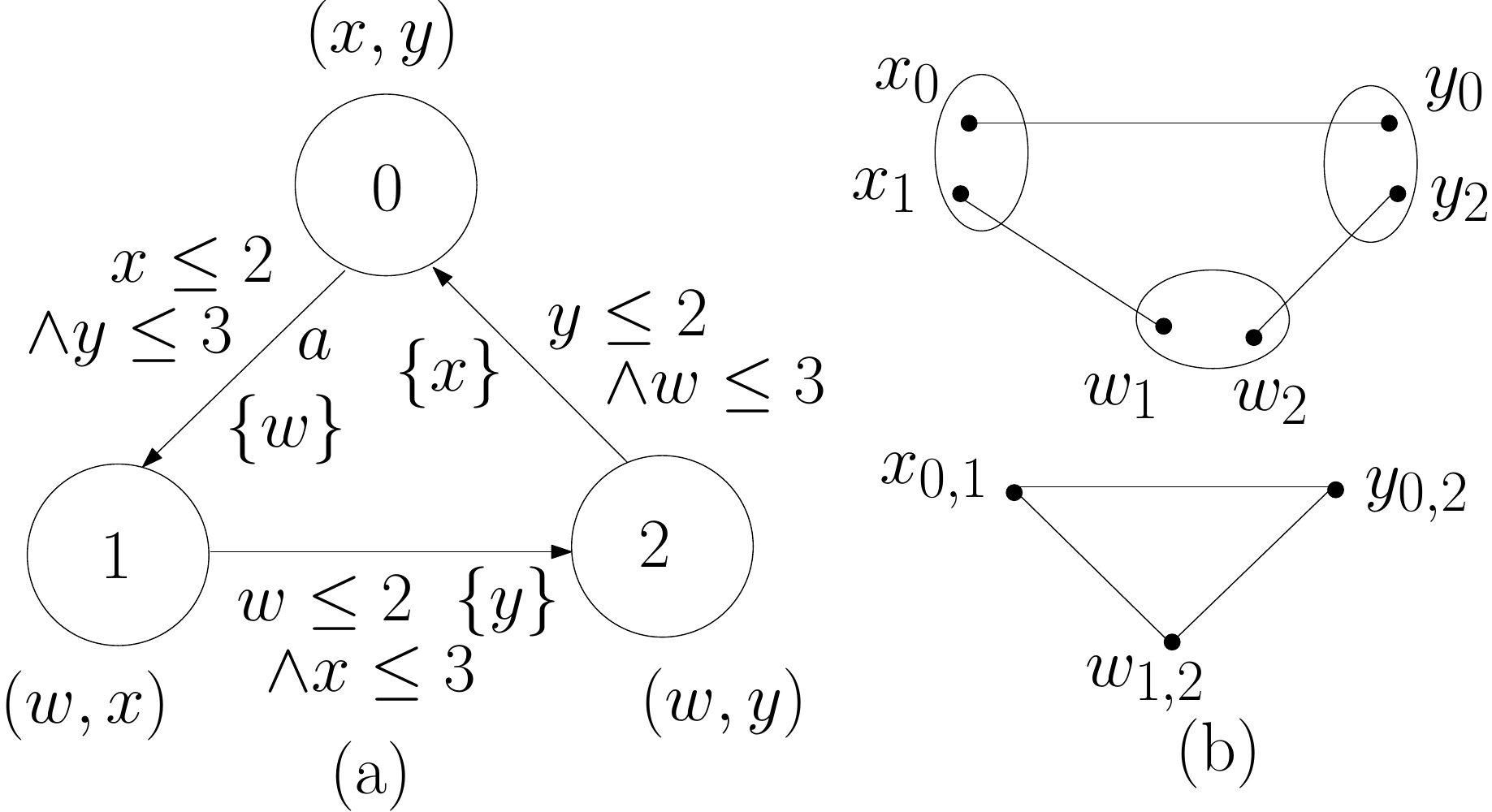}
\caption{\label{fig-clockcol} Colouring clock graph}
\end{minipage}
\end{figure}
\begin{lemma} \label{lem-stage3}
The operations in stage 3 produce a timed automaton $A_3$ that is timed bisimilar to the TA $A_2$ obtained at the end of stage 2.
\end{lemma}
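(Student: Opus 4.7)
The plan is to establish that ``$A_3$ is timed bisimilar to $A_2$'' is a loop invariant of Algorithm \ref{algo-stage3}. The base case is immediate because Line 1 assigns $A_3 := A_2$. The only statements that subsequently mutate $A_3$ are Line 13 (convex but non-overlapping case) and Line 28 (overlapping case). In both places the assignment $A_3 := A'$ sits inside an explicit test that requires $A'$ to be timed bisimilar to the current $A_3$ (Lines 12 and 26 respectively). Assuming inductively that $A_3$ was timed bisimilar to $A_2$ just before the update, symmetry and transitivity of timed bisimilarity give that the new $A_3 = A'$ is also timed bisimilar to $A_2$, preserving the invariant. All other branches (Lines 7, 15 and 32) leave $A_3$ unchanged and merely update the loop-local variables $g_{curr}$ and $next$. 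Once the algorithm terminates, the invariant delivers the lemma.

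Before relying on this loop-invariant argument I would justify that the bisimilarity tests at Lines 12 and 26 are effective. For this I would appeal to the decidability of timed bisimilarity \cite{Cer1} and, more concretely, to the product construction on pre-stable zone graphs given in \cite{WL1,GKNA1}. The pre-stable zone graph constructed in Section \ref{sec-zoneGraph} faithfully represents the semantics of any timed automaton arising during the run of Algorithm \ref{algo-stage3}, so the tests can be performed directly on zone graphs.

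A structural obligation worth discharging before the induction is that the merged guards produced at Lines 19, 21 and 23 actually correspond to a legal reshaping of the underlying chain of zones: the ranges $\range{g_{curr}'}{(l_i,a)}$ and $\range{g_{i_{next}}'}{(l_i,a)}$ after the rewrite must still consist of contiguous segments of zones in the $\before$-chain, so that each rewritten edge of $A'$ corresponds to a well-defined family of transitions in the zone graph. Lemma \ref{lem-pre-stability} supplies exactly this: every zone in ${\mathcal{Z}_i}_a$ is fully bounded above (and analogously below) by a single hyperplane, so the values $\lowerbound{\cdot}$ and $\upperbound{\cdot}$ used by the algorithm are genuine zone boundaries and the new guards $g_{curr}'$, $g_{i_{next}}'$ define syntactically valid edges of $A'$.

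The main obstacle is not the correctness of bisimilarity preservation, which is essentially by construction, but being satisfied that the three rewrite patterns at Lines 19--23 do exhaust the sensible ways to merge two adjacent ranges so that no bisimilarity-preserving merge is accidentally skipped. This concern, however, pertains to the effectiveness of the clock reduction carried out in stage 3, not to the bisimilarity claim of Lemma \ref{lem-stage3}, and hence does not affect this particular proof.
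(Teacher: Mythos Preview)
Your proposal is correct and takes essentially the same approach as the paper: both arguments rest on the observation that Algorithm~\ref{algo-stage3} only replaces $A_3$ by $A'$ after an explicit bisimilarity check, so by transitivity the current $A_3$ stays timed bisimilar to $A_2$ throughout. Your loop-invariant formulation is more explicit and carefully structured than the paper's informal paragraph, but the underlying idea is identical.
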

\proof
In the third stage, a zone of some location $l$ may be merged with another zone if their union is convex
and this process may be repeated a finite number of times.
The successor zones are also re-computed following the constraints on the outgoing edges from location $l$.
\delete{This leads to the removal of the hyperplanes bounding the zones.}This operation is done if the resultant zone graph is still timed bisimilar to the original zone graph, i.e. the initial states of the two zone graphs are timed bisimilar. The changes in the zone graph are reflected in the timed automaton in the following way. If from a location $l_i$  there are edges of the form $l_i \to{a, g_1, R_1} l_{i1}$, $l_i \to{a, g_2,R_2} l_{i2}$, $\dots, l_i \to{a, g_n, R_n} l_{in}$ in the timed automaton $A$ such that they are labelled with the same action $a$, then some of the constraints may be replaced with a constraint equivalent to $g_i \cup g_j \cup \dots$, $g_i, g_j \in \{g_1, g_2, \dots, g_n\}$
if the resultant timed automaton is timed bisimilar to the original timed automaton. Hence the operations in this stage create a timed bisimilar TA.
\qed

The procedure above requires checking whether the union of two zones is convex or not.
Here we describe the procedure as outlined in \cite{BFT1} for this.
An \emph{H-polyhedron}\index{H-polyhedron} defines the half-space representation of a convex polyhedron.
Let $\mathfrak{P}$ be a convex polyhedron in $\mathbb{R}^{dim}$, $dim$ being the dimension in the real space.
Given $\lambda \in \mathbb{R}^{dim}$ and $\chi \in \mathbb{R}$, the inequality $\lambda^Tx \le \chi$
is said to be \emph{valid} for $\mathfrak{P}$ if the inequality is satisfied by all points in $\mathfrak{P}$.

Let
$\mathfrak{P} = \{x \in \mathbb{R}^{dim} : \mathfrak{A}x \le \mu\}$ and \\
$\mathfrak{Q} = \{x \in \mathbb{R}^{dim} : \mathfrak{B}x \le \nu\}$
be (possibly unbounded) H-polyhedra\index{H-polyhedra}
whose envelope is defined by
\begin{center}
$env(\mathfrak{P}, \mathfrak{Q}) = 	\{x \in \mathbb{R}^{dim} : \overline{\mathfrak{A}}x \le \overline{\mu}, \overline{\mathfrak{B}}x \le \overline{\nu}\}$,
\end{center}
where $\overline{\mathfrak{A}}x \le \overline{\mu}$ is the subsystem of $\mathfrak{A}x \le \mu$
obtained by removing all the inequalities not valid for the polyhedron $\mathfrak{Q}$.
Similarly, $\overline{\mathfrak{B}}x \le \overline{\nu}$ is the subsystem of $\mathfrak{B}x \le \nu$
obtained by removing all the inequalities not valid for the polyhedron $\mathfrak{P}$.
It is easy to see that $\mathfrak{P} \cup \mathfrak{Q} \subseteq env(\mathfrak{P}, \mathfrak{Q})$.
Theorem 3 in \cite{BFT1} states the following important result.
\begin{center}
$\mathfrak{P} \cup \mathfrak{Q}$ is convex $\Leftrightarrow \mathfrak{P} \cup \mathfrak{Q}  = env(\mathfrak{P}, \mathfrak{Q})$
\end{center}
This theorem also leads to an algorithm for checking the convexity of $\mathfrak{P} \cup \mathfrak{Q}$ for two given polyhedra $\mathfrak{P}$ and $\mathfrak{Q}$
and it also produces a minimal H-representation of $\mathfrak{P} \cup \mathfrak{Q}$ when the union is convex.
The representation is minimal in the sense that it does not contain any redundant inequality.
An inequality is said to be redundant for an $H\!\!-\!\!polyhedron$ $\mathfrak{P}$,
if its removal preserves the polyhedron.
As shown in the paper \cite{BFT1}, the complexity of running the algorithm is polynomial
though it is not strongly polynomial since the algorithm involves solving LPs	
but algorithms for solving LPs like the ellipsoid method and the interior point method are not strongly polynomial.

As mentioned above, in this stage, while removing the constraints, timed bisimilarity is checked and the number of bisimulation checks is bounded by the number of zones in the zone graph of the TA obtained after stage 2.
Further each check of timed bisimilarity involves checking convexity of union of zones
that invokes the algorithm of \cite{BFT1} that has a polynomial complexity.
From Lemma \ref{lem-split}, in stage 2, no new valuations are added to the underlying state space of the original TA $A$,
and corresponding to every valuation $(l, v)$, exactly one valuation $(l_i, v)$ is created,
where $l$ and $l_i$ are as described in Lemma \ref{lem-split}. Thus the number of zones in the zone graph of $A_2$
is still exponential in the number of clocks of the original TA $A$.
Checking timed bisimilarity is done in EXPTIME \cite{Cer1,LH1}. A zone graph is constructed prior to every bisimulation check and
the construction is done in EXPTIME. Hence this entire stage runs in EXPTIME.

\textbf{Stage 4: Finding Active clocks, clock replacement and renaming:}
The active clocks in a location determine the behaviour
of the timed automaton from that location.
The valuation of a clock that is not active at location $l$
does not affect the behaviour of the timed automaton from that location.
For example, a clock that does not appear in the
outgoing edges of the locations reachable from $l$
is not active at $l$.
Similarly, if a clock is reset on every path starting from location $l$ 
before appearing in a constraint, then too it is not active at $l$.
Given a location $l$, an iterative method for finding the set of active clocks at $l$, denoted $act(l)$, is given in \cite{DY1}.
The method has been modified and stated below for the case where clock assignments of the form $x := y, x, y \in C$ are disallowed.

\textit{\underline{Determining active clocks}}\index{active clocks} : For a location $l$, let $clk(l)$ 
be the set of clocks that appear on the constraints on the outgoing edges of $l$.
Let $\rho : (2^C \times E) \rightarrow 2^C$ be a partial function such that for an edge $e = l \to{g, a, R} l'$, $\rho(act(l'), e)$ gives the set of active clocks of $l'$ that are not reset along $e$.
For all $l \in L$, $act(l)$ is the limit of the convergent sequence $act_0(l) \subseteq act_1(l) \dots$ such that
$act_0(l) := clk(l)$ and 
$act_{i+1}(l) := act_i(l) \: \cup \: \displaystyle{\bigcup_{e=(l, g, a, R, l') \in E}} \rho(act_i(l'), e)$.

\textit{\underline{Removing redundant resets}} : Once we find the active clocks of a location $l$, we remove all resets of clock $x$ on the incoming edges of $l$ if $x \notin act(l)$.

\textit{\underline{Partitioning active clocks}}\index{active clocks!partitioning} : Using the DBM representation of the zones, one can determine from the set of active clocks in every location whether some of the clocks in the timed automaton can be expressed in terms of other clocks and thus be removed.
Any $x,y\in act(l)$ belong to an equivalence class iff the same relation of the form $x-y=k$,
for some fixed integer $k$, is maintained between these clocks across all zones of $l$.
This is checked using the DBM of the zones of $l$. In this case either $x$ can be replaced by $y + k$ or
$y$ can be replaced by $x - k$. Let $\pi_l$ be the partition induced by this equivalence relation. 

We note that the size of the largest partition does not give the
minimum number of clocks required to represent a TA while preserving timed bisimulation.
An example is shown in Figure \ref{fig-clockcol}(a).
Though the automaton in the figure has two active clocks
partitioned into two different classes in every location,
a timed bisimilar TA cannot be constructed with only two clocks.
Assigning the minimum number of clocks to represent the timed automaton so that timed bisimilarity is preserved can be reduced to the problem of finding the chromatic number of a graph as described below.

\textit{\underline{Clock graph colouring and clock renaming}}\index{clock reduction!clock renaming} : A clock graph\index{clock graph}, $G_{A_3}$, for the timed automaton
$A_3$ is constructed in the following way. This graph contains a vertex for each class in the partition $\pi_l$, for every location $l$. Let $V_l$ be the set of vertices corresponding to the classes of $\pi_l$.
For each pair of distinct vertices $r_1, r_2 \in V_l$, an edge between $r_1$ and $r_2$ exists denoting that $r_1$ and $r_2$ cannot be assigned the same colour.
This is because two classes in the partition $\pi_l$ cannot be represented using the same clock.

Moreover, if at least one clock, say $x$, is common in two classes corresponding to two different locations without any intervening reset of $x$ then only one vertex represents these two classes.
For example, in Figure \ref{fig-clockcol}, clock $x$ is active in both locations $0$ and $1$.
$\{x\}$ forms a class in the partition of the active clocks for each of locations $0$ and $1$.
Thus we create vertices $x_0$ and $x_1$ corresponding to these two classes.
However, since there is no intervening reset of clock $x$ between locations 0 and 1,
the vertices $x_0$ and $x_1$ are merged together
and the resultant graph is termed the \emph{clock graph}.
Thus after merging some classes into one class,
the resultant class can have active clocks corresponding to multiple locations.
For a class $\mathcal{T}$, let $loc(\mathcal{T})$ represent the set of locations
whose active clocks are members of $\mathcal{T}$.

Finding the minimum number of clocks to represent the TA $A_3$ is thus equivalent to colouring its clock graph\index{clock graph!colouring} with the minimum number of colours so that no two adjacent vertices have the same colour. The number of colours gives the minimum number of clocks required to
represent the TA. If a colour $\iota$ is assigned to a vertex $r$, then all the clocks in the 
class corresponding to $r$, say $\mathcal{T}$, are renamed $\iota$.
The value of $\iota$ can be chosen to be equal to some
clock in $\mathcal{T}$ that is considered to be the \emph{representative clock} for that class.
The constraints involving the rest of these clocks in $\mathcal{T}$
are adjusted appropriately and any resets of the clocks, different from the representative clock, present on the incoming edges to $l$ such that $l \in loc(\mathcal{T})$ are also removed.

For example, suppose vertex $r$ corresponds to a class $\mathcal{T}$ having clocks $x$, $y$ and $z$
such that the valuations of the clocks are related as : $x - y  = k_1$ and $y - z = k_2$.
If colour $\iota$ is assigned to vertex $r$, then the clocks $x$, $y$ and $z$ in class $\mathcal{T}$ are replaced with $\iota$. If the value of clock $\iota$ is chosen to be the same as clock $y$, then every occurrence of $x$ in $\mathcal{T}$ is replaced with $y + k_1$,
while every occurrence of $z$ in $\mathcal{T}$ is replaced with $y - k_2$
in  the constraints involving $x$ and $z$.
The corresponding resets of clocks $x$ and $z$ are also removed.

In Figure \ref{fig-clockcol}(a), a TA with three locations is shown. In locations 0, 1 and 2, the sets of active clocks are $\{x, y\}$, $\{w, x\}$ and $\{w, y\}$ respectively. At every location, in this example, each of the active clocks itself makes a class of the partition.
Since there are six classes in total, we draw initially six vertices.
As mentioned earlier, the vertices $x_0$ and $x_1$ are merged into a single vertex.
Similarly $w_1$, $w_2$ and $y_0$, $y_2$ are also merged. We call the resultant vertices $x_{0,1}$, $w_{1,2}$ and $y_{0,2}$.
Adding the edges as described previously, we get the clock graph which is a triangle as shown in Figure \ref{fig-clockcol}(b).
Thus the chromatic number of this graph is 3 which translates to the
number of clocks obtained through the operations in stage 4.
Since this stage consists of finding the active clocks and renaming them, we have the following lemma.
\begin{lemma} \label{lem-stage4}
The operations in stage 4 produce a timed automaton $A_4$ that is timed bisimilar to the TA $A_3$ obtained after stage 3.
\end{lemma}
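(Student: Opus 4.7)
The plan is to treat the four sub-operations of stage 4---redundant-reset removal, active-clock partitioning, clock-graph colouring, and clock renaming---as successive rewrites of the timed automaton, and to exhibit a timed bisimulation between the TLTS of the automaton before each rewrite and the TLTS after. Since timed bisimilarity is transitive, chaining these gives the lemma.

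For redundant-reset removal, I would use the relation $\mathcal{R}_1 = \{((l,v),(l,v')) \mid v(x)=v'(x) \text{ for all } x \in act(l)\}$. The fixed-point definition of $act(l)$ as $act_{i+1}(l) = act_i(l) \cup \bigcup_{e=(l,g,a,R,l')\in E} \rho(act_i(l'),e)$ guarantees that every clock $x \notin act(l)$ is either absent from all guards reachable from $l$ or is reset before any such use, so delay closure of $\mathcal{R}_1$ is trivial and discrete closure follows because the guards of outgoing edges from $l$ refer only to $clk(l) \subseteq act(l)$ and the target-location active set is preserved by $\rho$.

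For clock renaming, the crucial invariant is that, for every location $l$ and every class $\mathcal{T} \in \pi_l$, all reachable valuations $v$ at $l$ satisfy $v(x)-v(y)=k^{l}_{xy}$ for a fixed integer $k^{l}_{xy}$; this is precisely how $\pi_l$ is extracted from the DBMs of the zones of $l$. After the colouring step has selected, for each colour $\iota$, a representative clock $x_\iota^l \in C_3$ at each location $l \in loc(\mathcal{T})$, I would define a relation $\mathcal{R}_2$ that pairs $(l,v)$ with $(l,v')$ whenever $v'(\iota) = v(x_\iota^l)$ for every colour $\iota$ active at $l$ and $v(x) = v'(\iota) + k^{l}_{x,x_\iota^l}$ for every $x$ in the class represented by $\iota$. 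Delay closure is immediate since every clock advances uniformly. For discrete transitions, the rewritten guard $g'$ substitutes $\iota + k^{l}_{x,x_\iota^l}$ for each non-representative $x$ in $g$, so $v \models g$ iff $v' \models g'$; the rewritten reset set is $R' = R \cap C_4$.

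The main obstacle is verifying that post-reset valuations remain in $\mathcal{R}_2$ at the target $l'$, since classes can fragment or coalesce across a transition. The argument hinges on the merging rule in the clock-graph construction: two vertices $x_l$ and $x_{l'}$ are identified into a single vertex only when no intervening edge resets $x$, so whenever $x$ and its representative $\iota$ share a colour at both endpoints of an edge, the edge either resets both of them (yielding offset $0$ at $l'$) or neither (preserving the offset from $l$). If instead $x$ is reset while the clock representing $\iota$ at $l$ is not, then $\pi_{l'}$ must place them in separate classes, and the colouring assigns them distinct colours which the transition updates independently. A finite case analysis over these possibilities closes $\mathcal{R}_2$, and transitivity of timed bisimilarity then yields that $A_4$ is timed bisimilar to $A_3$.
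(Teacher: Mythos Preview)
Your approach is correct and considerably more detailed than the paper's own argument. The paper's proof is essentially an assertion: it recalls that stage~4 identifies active clocks, partitions them into classes with fixed integer offsets to a representative~$\iota$, replaces each constraint $x \bowtie k$ by $\iota \bowtie k-m$ when $x-\iota=m$ holds throughout the relevant locations, and then simply states that ``the clock renaming and replacing the integer constants in the way mentioned above do not change any bisimulation property of a timed state.'' No explicit relation is exhibited and the closure of such a relation under discrete transitions---the point you flag as the main obstacle---is not addressed.

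Your explicit relations $\mathcal{R}_1$ and $\mathcal{R}_2$ supply precisely the justification the paper leaves implicit. In particular, your observation that the vertex-merging rule in the clock-graph construction (two class-vertices for the same clock at different locations are merged only when there is no intervening reset) is what guarantees that the offset invariant survives a transition is the key technical point, and the paper does not isolate it. One small simplification you could make: of the four sub-operations you list, active-clock partitioning and graph colouring are pure analysis steps that do not themselves rewrite the automaton, so only two rewrites---redundant-reset removal and the final renaming---actually require a bisimulation; the decomposition can be collapsed to two stages accordingly.
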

\proof
In stage 4, clocks are renamed.
Also in every zone in each location, the active clocks are identified and partitioned such that all clocks belonging to a class in the partition are represented with a single clock, say $\iota$.
If there is a clock $x$ in the same class $\mathcal{T}$ such that $x - \iota = m$, where $m$ is an integer, then the constraint of the form $x \bowtie k$ is replaced with $\iota \bowtie k -m$ for all the constraints on the edges outgoing from locations $l \in loc(\mathcal{T})$.
The clock renaming and replacing the integer constants in the way mentioned above do not change any bisimulation property of a timed state.
Hence the operations in stage 4 preserve the timed bisimulation property of the original TA $A$.
\qed
\begin{lemma} \label{lem-clockrename}
The problem of clock renaming operation on the clock graph as described above is NP-complete in the size of the input.
\end{lemma}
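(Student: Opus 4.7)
The plan is to establish NP-completeness by showing (i) the problem lies in NP and (ii) reducing the classical graph $k$-colouring problem to it.

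For membership in NP, I would observe that a candidate colouring assigns a colour from a palette of at most $\lvert V(G_{A_3})\lvert$ colours to each vertex of the clock graph. Given such an assignment, one checks in time polynomial in the size of the clock graph whether every edge has endpoints of different colours; the clock graph itself is constructed from the zones and partitions produced in the previous stages and has size polynomial in the input to this stage. Hence verifying a proposed renaming is polynomial.

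For NP-hardness, the plan is to reduce from $k$-colourability of an arbitrary undirected graph $G=(V,E)$. Given $G$, I would construct a timed automaton $A_G$ whose clock graph is exactly $G$. The construction uses one clock $x_v$ per vertex $v\in V$. For each edge $(u,v)\in E$, I would introduce a dedicated location $l_{uv}$ and arrange the incoming edges, guards and resets so that (a) $x_u$ and $x_v$ both lie in $act(l_{uv})$, (b) there is no relation of the form $x_u - x_v = k$ holding uniformly over the zones of $l_{uv}$, so $\{x_u\}$ and $\{x_v\}$ are distinct singleton classes of $\pi_{l_{uv}}$, and (c) every other pair of clocks is either not simultaneously active at $l_{uv}$ or collapses into a single equivalence class so that no spurious vertices or edges are created. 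To prevent the merging rule (which identifies classes from different locations that share a clock with no intervening reset) from collapsing distinct vertices across the edge-gadgets, I would place a reset of every clock on the transitions entering each $l_{uv}$ except for $x_u$ and $x_v$, and also route the automaton so that the only surviving vertex for clock $x_v$ across the whole automaton is a single vertex identified with $v$. A proposed renaming of $A_G$ using $k$ clocks then yields a proper $k$-colouring of $G$, and conversely.

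The principal technical obstacle is constructing the gadgets so that the clock graph of $A_G$ is precisely $G$ and nothing more: one must both (a) create the required adjacency in $E$ by forcing two clocks to be active in the same location with no constant offset between them across the zones of that location, and (b) avoid accidentally introducing edges or merging vertices through the rule that collapses classes sharing a clock without intervening resets. This is handled by isolating each edge gadget behind clock resets and ensuring the zones of each location admit no unintended affine relations; once the gadget is verified, the polynomial-time construction and the one-to-one correspondence between proper $k$-colourings of $G$ and $k$-clock renamings of $A_G$ finish the reduction.
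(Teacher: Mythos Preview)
Your proposal is correct and follows essentially the same reduction as the paper: both reduce from graph colouring by introducing one clock per vertex of $G$ and one location per edge, arranging that exactly the two incident clocks are active (and in distinct equivalence classes) at each edge-location, so that the resulting clock graph is isomorphic to $G$. You additionally sketch the NP membership argument, which the paper leaves implicit.

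The main difference is in the gadget details. The paper builds an acyclic chain of the edge-locations $l_1,\dots,l_m$, drawing a transition from $l_u$ to $l_v$ precisely when they share an active clock (not resetting that shared clock), and chooses the guard constants large enough that no location is split in stage~2; this is how it ensures that all occurrences of $x_v$ across different edge-locations merge into a single clock-graph vertex. Your plan instead resets all but the two relevant clocks on entry to each $l_{uv}$ and then ``routes'' the automaton so that occurrences of $x_v$ still merge. Both ideas work, but note that your reset-everything-else strategy and your single-vertex-per-clock requirement pull in opposite directions: you must still arrange, for every vertex $v$, a path through all edge-locations containing $x_v$ with no intervening reset of $x_v$, which is exactly what the paper's chain construction accomplishes directly. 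Making this routing explicit is the one place your plan would need to be fleshed out.
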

\proof Consider any arbitrary graph $G$ without any parallel edge, i.e. there should be at most one edge between a pair of vertices in $G$.
We show that there exists a TA whose clock graph is the same as $G$. Thus coloring the vertices of $G$ is reduced to  assigning minimum number of clocks to the clock graph.

The reduction is the following: Let the vertices of $G$ be $x_1, x_2, \dots, x_n$.
Corresponding to every edge $e$ between $x_i$ and $x_j$, $1 \le i, j \le n$,
create a location $l$ of the timed automaton whose active clocks are $x_i$ and $x_j$.
For the reduction to be correct, we also need to show that $x_i$
cannot be replaced with $x_j$ in any of the constraints on the outgoing edges from location $l$.
Let the locations of the timed automaton be numbered $l_1, l_2, \dots, l_m$ where $m$ is the number of edges in $G$.

Now note that every location $l_1, \dots, l_m$ has exactly two active clocks.
We create a timed automaton which is acyclic.
We create the timed automaton in a way such that among the locations $l_1, l_2, \dots, l_m$,
there will be exactly one location in the automaton that will have no incoming edge and
one location without any outgoing edge.
We create three additional locations $l_0$, $l_{m+1}$ and $l_{m+2}$.
We add an edge from location $l_0$ to the location without
any incoming edge and similarly an edge from the location among $l_1, l_2, \dots, l_m$
without any outgoing edge to location $l_{m+1}$
and an edge from location $l_{m+1}$ to location $l_{m+2}$.
If the active clocks of $l_1$ are $x_i$ and $x_j$,
then we draw an edge from $l_0$ to $l_1$ with the constraint $x_i \le 1$ and reset $x_j$.

The other edges in the timed automaton are drawn in the following manner.
We add the edge from location $l_u$ to $l_v$ (and not from $l_v$ to $l_u$)
such that doing so maintains the acyclicity of the timed automaton.
An edge is drawn between two locations $l_u$ and $l_v$ if
$act(l_u) \cap act(l_v) \neq \emptyset$.
Note that there can be at  most one active clock common between two distinct locations $l_u$ and $l_v$.

Now we describe the constraints and the resets on the edges of the timed automaton.
Consider two locations $l_u$ and $l_v$ connected by an edge from $l_u$ to $l_v$ and let the active clocks of $l_u$ be $x_i$ and $x_j$ while the active clocks of $l_v$ be $x_j$ and $x_t$.
On this edge we add a constraint $x_i \le k$, where $k \in \mathbb{N}$ and reset $x_t$.
The value of $k$ is chosen in the following way.
Suppose clock $x_i$ was reset on an incoming edge of $l_w$ such that there is a path from $l_w$ to $l_u$ without any reset of $x_i$.
We denote by $wt(x_i)_{w, u}$, the maximum sum of the integers 
used in the constraints over all paths from $l_w$ to $l_u$.
We assign to $k$ the value $\displaystyle{1 + \max_w(wt(x_i)_{w, u})}$,
i.e. one added to the maximum of such weights at location $l_u$ computed over all incoming paths.
This value of $k$ ensures that all the valuations that reach location $l_u$,
satisfy all the constraints that appear
on the edges reachable from location $l_u$.
This leads to the fact that none of the locations are further split.

If the active clocks at some $l_i$, $1 \le i \le m$ be $x_r$ and $x_s$,
then we add an edge from $l_i$ to $l_{m+1}$
with the constraint $x_r \le k_r$ 
and an edge from $l_{m+1}$ to $l_{m+2}$
with the constraint $x_s \le k_s$
without loss of generality.
$k_r$ and $k_s$ are calculated
in the same way as $k$ as described above.

We can construct the clock graph corresponding to the timed automaton thus constructed and find
that the clock graph is isomorphic to the input graph $G$.
Thus $G$ can be coloured by renaming the clocks of the timed automaton constructed above
so that least possible number of clocks are used after the renaming the clocks.
\qed

\begin{figure}
\centering
\vspace{-10pt}
\includegraphics[width=0.85\textwidth]{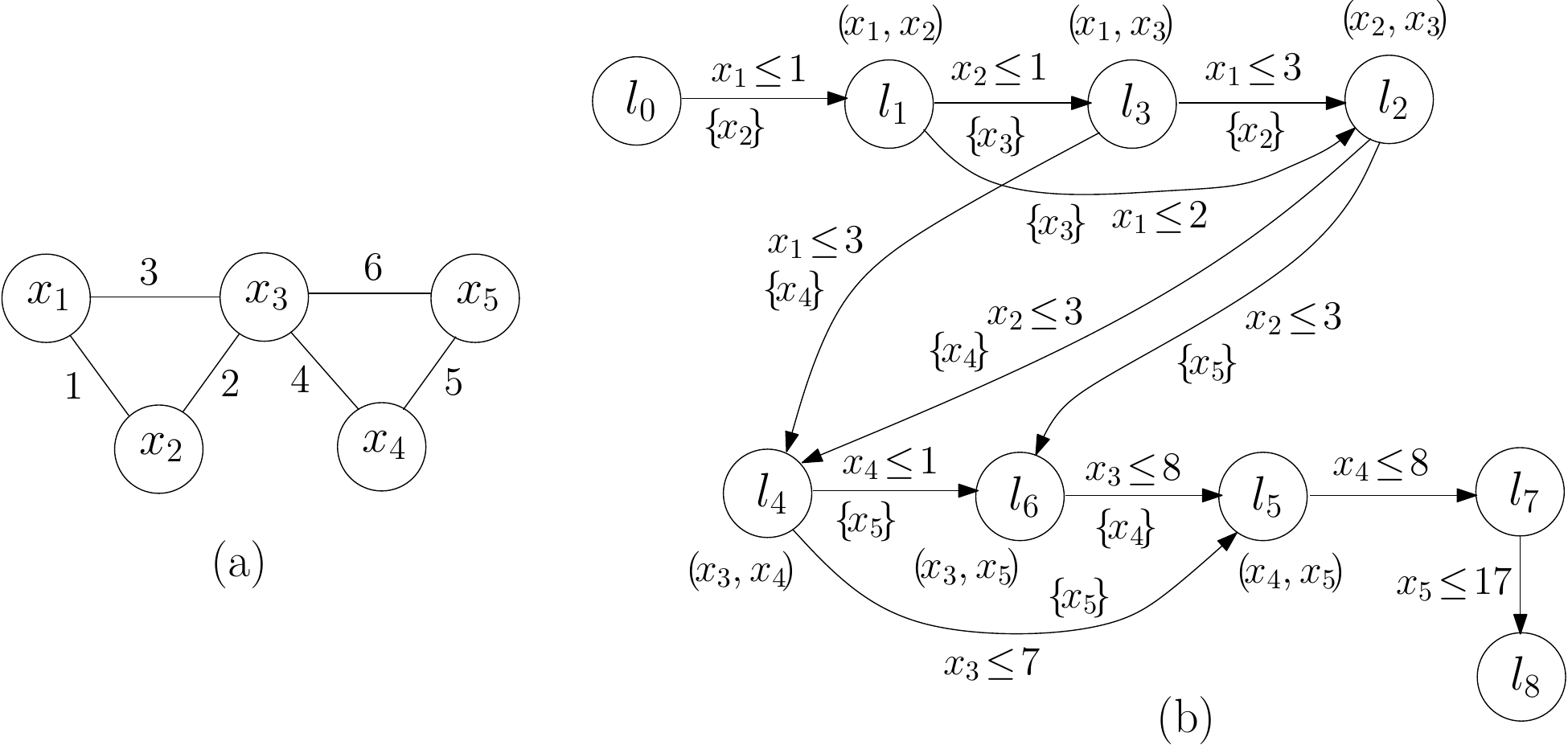}
\caption{\label{fig-reduceNP}(a) a graph $G$ (b) TA with clock graph same as $G$}
\vspace{-10pt}
\end{figure}
Figure \ref{fig-reduceNP}(a) shows a graph $G$ and \ref{fig-reduceNP}(b) shows the TA constructed
following the procedure described above.
Note that there are 6 edges in $G$ corresponding to which
there are 6 locations in the TA which are $l_1$ to $l_6$.
There are three additional locations $l_0$, $l_7$ and $l_8$.
Corresponding to every location in $l_1$ to $l_6$, its active clocks are written inside parentheses. 
For example, the active clocks of $l_1$ are $x_1$ and $x_2$ denoting that the edge in $G$ corresponding to
location $l_1$ connects vertices $x_1$ and $x_2$ in $G$.

We look at the complexity of the operations in this stage.
The sequence of computation of active clocks converges within $n$ iterations and every iteration runs in time $O(\lvert E \lvert)$, 
where there are $n$ locations and $\lvert E \lvert$ edges respectively in the timed automaton after the first three stages.
This is due to the fact that in iteration $i$, for some location $l$,
its active clocks are updated so as to include those active clocks of the locations $l'$
such that there exists a path of length at most $i$ between $l$ and $l'$
and these clocks are not reset along this path.
In each iteration, each edge is traversed once for updating the set of active clocks of the locations.
Thus the complexity of finding active clocks is $O(n \times \lvert E \lvert)$.
Partitioning the active clocks of each of the locations too requires traversing the zone graph and checking the clock relations from the DBM of the zones. This can be done in time equal to the order of the size of the zone graph times the size of DBM which is in EXPTIME.

Finally, determining the chromatic number of a graph is possible in time exponential in the number of the vertices of the graph \cite{Law1,Epp1}. Since the number of locations after the splitting operation in stage 2 is exponential in the number of clocks in $A$, renaming the clocks using the clock graph runs in time doubly exponential in the number of the clocks of the original timed automaton $A$.
Thus we have the following theorem.
\begin{theorem}
The stages mentioned above run in 2-EXPTIME.
\end{theorem}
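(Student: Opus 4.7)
The plan is to bound the running time of each of the four stages separately and observe that the total is dominated by Stage 4, which is the only one incurring a double-exponential cost. Stages 1 through 3 will each be shown to run in EXPTIME in $|C|$, and Stage 4 in 2-EXPTIME; summing the four bounds then gives 2-EXPTIME overall.

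First I would formalise the cost of Stage 1 by appealing to the fact that the pre-stable zone graph, constructed in Section~\ref{sec-zoneGraph} using the location dependent maximal constants abstraction of \cite{BBFL1}, has size $2^{O(|C|)}$ and is computed in time polynomial in its own size. Pruning the transitions that are never enabled is just a linear scan over this structure, so Stage 1 is in EXPTIME. For Stage 2 I would invoke Lemma~\ref{lem-split}: the number of newly created locations is bounded by the total number of base zones across $A_1$, again $2^{O(|C|)}$, and for each one the computations of $Z'_{i_j}$, of the weakest guard $g_{r_j}$, and of the pairwise implication checks among the elementary constraints of $g_i$ are polynomial in the size of the zone graph, so Stage 2 is EXPTIME.

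Stage 3 requires slightly more careful accounting. For every location $l_i$ and every action $a$, the loop of Algorithm~\ref{algo-stage3} performs at most a polynomial (in $|\setg{l_i}{a}|$) number of merge attempts. Each attempt performs a convexity test, which by \cite{BFT1} is polynomial, and at most one timed bisimilarity check. By \cite{Cer1,LH1}, checking timed bisimilarity between two automata with $|C|$ clocks is EXPTIME in $|C|$, and since Stage 2 does not increase the clock count (Lemma~\ref{lem-split}), the base of the exponent is still $|C|$. Summed over $l_i,a$, and merge attempts, this contributes at most $2^{O(|C|)}$ bisimilarity checks, keeping the stage in EXPTIME.

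The main obstacle, and the source of the outer exponential, will be Stage 4. After Stage 2 the number of locations is $2^{O(|C|)}$, so the clock graph has at most $O(|C| \cdot 2^{O(|C|)}) = 2^{O(|C|)}$ vertices, one per active-clock partition class (after the merging of classes across locations that share an unreset clock). Computing the $act(l)$ fixpoint $act_0(l) \subseteq act_1(l) \subseteq \cdots$ and the partitions $\pi_l$ from the DBM entries of the zones is polynomial in $|A_3|$ and the zone graph, hence EXPTIME. But determining the chromatic number of the clock graph requires time exponential in the number of its vertices \cite{Law1,Epp1}; instantiated at $v = 2^{O(|C|)}$ this yields a $2^{2^{O(|C|)}}$ cost. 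Adding the four bounds gives a total running time of $2^{2^{O(|C|)}}$, i.e.\ 2-EXPTIME, as claimed.
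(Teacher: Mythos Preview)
Your proposal is correct and follows essentially the same approach as the paper: you bound Stages 1--3 each in EXPTIME (via the size of the pre-stable zone graph, Lemma~\ref{lem-split}, and the EXPTIME bisimilarity check of \cite{Cer1,LH1} together with the polynomial convexity test of \cite{BFT1}), and you identify Stage 4 as the bottleneck, where the chromatic-number computation on a clock graph with $2^{O(|C|)}$ vertices yields the $2^{2^{O(|C|)}}$ bound. This is exactly the paper's decomposition and reasoning.
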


In the presence of an invariant condition, considering an edge $l \to{g, a, R} l'$, a zone $Z'$ of $l'$ is initially created 
from zone $Z$ of $l$ such that $Z'=(Z \cap g)_{[R \leftarrow \overline{0}]}\!\!\uparrow \cap I(l')$, if $(Z \cap g)_{[R \leftarrow \overline{0}]}\!\!\uparrow \cap I(l') \neq \emptyset$, where $I(l')$ is the invariant on location $l'$.
The edge can be removed from the timed automaton if for all zones $Z$ of $l$, $(Z \cap g)_{[R \leftarrow \overline{0}]}\uparrow \cap I(l') = \emptyset$. 
The invariant condition on location $l'$ can be entirely removed if for every incoming edge $l \to{g, a, R} l'$ and for each zone $Z$ of $l$, $(Z \cap g)_{[R \leftarrow \overline{0}]}\uparrow \cap I(l') = (Z \cap g)_{[R \leftarrow \overline{0}]}\uparrow$ holds.
Considering the clock relations in the zones of a location, an elementary constraint in the invariant too is removed if it is implied by the rest of the elementary constraints in the invariant.
In stage 4, for computing the set of active clocks $act(l)$ in location $l$, $clk(l)$ is defined as the 
union of the set of clocks appearing in the constraints on the outgoing edges from $l$ and the set of clocks appearing in $I(l)$.

\textbf{Proof of Minimality of clocks:}
Let $A_4$ be the TA obtained from a TA $A$ through the four stages described earlier.
We show that all the hyperplanes induced by the constraints appearing in the
timed automata $A_4$ and appearing in its zone graph are necessary
for defining the behaviour of the timed automata.
Further, we can show that these constraints in $A_4$ are expressed using the minimum possible number of clocks.

The following lemma states that every hyperplane in
the zone graph of $A_4$ is needed for preserving bisimulation.
\begin{lemma} \label{lem-hyperplane}
In the zone graph of $A_4$, every hyperplane (modulo clock renaming) 
that bounds a zone from above is necessary.
\end{lemma}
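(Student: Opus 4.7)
The plan is to argue by contradiction: suppose there exists a hyperplane $H \equiv (x = h)$ bounding some zone $Z$ of a location $l_i$ from above in the zone graph of $A_4$, such that $H$ is unnecessary, i.e., if we merge $Z$ with its immediate delay successor $Z'$ (and propagate this change through the zone graph), the resulting automaton is still timed bisimilar to $A_4$.

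First I would trace the origin of $H$. By the zone graph construction in Algorithm \ref{algo-zonegraph} and by Lemma \ref{lem-pre-stability}, $H$ fully bounds $Z$ from above and corresponds to an elementary constraint $c \equiv (x \bowtie h)$ that appears in $A_4$ either (a) as a lower or upper bound in some guard on an outgoing edge from a reachable location $l_j$ where $x$ has not been reset along a path from $l_j$ to $l_i$, or (b) in a location invariant. By Lemma \ref{lem-split}, each location of $A_4$ comes from a base zone produced by stage 2, which also removes elementary constraints that are implied by the conjunction of the remaining constraints together with the future of the base zone; hence $c$ is \emph{tight} at $l_j$ in the sense that it is not redundant there.

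Next I would invoke the exhaustiveness of Algorithm \ref{algo-stage3}. For the action $a$ labelling the edge of $l_j$ on which $c$ appears, the algorithm traverses all ordered pairs in $\setg{l_j}{a}$ along $\beforeg$ and tries every syntactically valid merge (the non-overlapping case and cases (i), (ii), (iii) of the overlapping case) that would eliminate the intermediate boundary corresponding to $c$. A merge is accepted whenever the resulting TA is timed bisimilar to the current one. Under our hypothesis that removing $H$ preserves bisimulation, the corresponding merge at $l_j$ would have been accepted by stage 3, so $c$ — and hence $H$ — would not appear in $A_4$, a contradiction. The invariant case is handled by the parallel observation, noted at the end of Section \ref{sec-method}, that stage 3's implication-based elimination also applies to invariant elementary constraints.

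The main obstacle will be the propagation step: showing that the local merge considered by stage 3 at the originating location $l_j$ is bisimulation-equivalent to the global modification that removes $H$ in the zone $Z$ of $l_i$. This requires an auxiliary argument that along any reset-free path from $l_j$ to $l_i$, the constraint $c$ propagates precisely to the hyperplane $H$, and that weakening $c$ at $l_j$ yields exactly the merged zone obtained by deleting $H$ at $l_i$ after re-running the pre-stable zone graph construction. A secondary obstacle is handling the case where multiple distinct elementary constraints (perhaps originating at distinct ancestor locations, distinct resets, or as a mix of invariant and guard constraints) all induce the same hyperplane $H$ in $Z$; here one must argue that at least one of them is tight in the above sense, so that stage 3's attempt at the correct $l_j$ suffices to contradict the assumption — the \textit{modulo clock renaming} clause of the statement accommodating the fact that stage 4 may subsequently rename the clock carrying $c$.
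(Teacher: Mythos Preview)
Your approach is essentially the same as the paper's: both argue that any hyperplane surviving in the zone graph of $A_4$ must be necessary because stage~3 exhaustively tries to merge adjacent zones and retains a boundary only when the merge would break timed bisimilarity. The paper's proof is in fact much terser than your plan---it is only a few sentences and simply asserts that since stage~3 merges a zone with its immediate successor iff bisimilarity is preserved, any hyperplane still present after stage~3 could not have been removed, hence is necessary.

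Where you go beyond the paper is in flagging the two ``obstacles'' (propagating a local guard-merge at the originating location $l_j$ to the removal of $H$ at a downstream location $l_i$, and the possibility that several distinct constraints induce the same $H$). The paper's proof does not address either of these points explicitly; it treats the correspondence between guards in the automaton and hyperplanes in the zone graph as immediate. So your plan is sound and aligned with the paper, and the extra care you propose would actually strengthen an argument that the paper leaves rather informal. One minor correction: the handling of invariants that you attribute to stage~3 is in the paper described separately (in the paragraph on invariant conditions just before the minimality proof), not as part of Algorithm~\ref{algo-stage3} itself, so you should cite that paragraph rather than stage~3 for the invariant case.
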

\proof
Every constraint appearing in the timed automaton $A_4$ induces a hyperplane
in the zone graph of the automaton.
We need to show that each of these hyperplanes defines the behaviour of timed automaton;
stated otherwise, removal of any of these hyperplanes will alter the behaviour of
the initial state of the timed automaton
and with this modification, the initial state of the automaton
will no longer be timed bisimilar to the initial state of 
any timed automaton timed bisimilar to $A_4$.

In stage 3, we consider constraints on multiple outgoing edges from a location
such that each of these edges is labelled with the same action.
In this stage, a zone is merged with its immediate successor zone if and only if
by doing so, the initial state remains timed bisimilar to the
one before merging the zones.

Thus in the zone graph of $A_4$, if a hyperplane exists, it implies that
it could not be removed due to merging of zones in stage 3
since by doing so, the automaton does not remain bisimilar any more.
Hence the hyperplane is necessary for preserving timed bisimulation.

Since we argue about any arbitrary hyperplane in the zone graph of $A_4$,
we can say that every hyperplane in the zone graph of $A_4$ is necessary for preserving
bisimulation property of $A_4$.
\qed
\begin{lemma} \label{lem-plane}
In the TA $A_4$, for each location $l$ and clock $x \in act(l)$, there exists a constraint involving $x$
on some outgoing edge of $l$ or on the outgoing edge of another location $l'$ reachable from $l$ such that
there is at least one path from $l$ to $l'$ without any intervening reset of $x$.
This constraint induces a hyperplane in the zone graph of $A_4$ which
if removed so as to not bound a zone in the zone graph of the timed automaton,
does not preserve timed bisimulation property of the timed automaton $A$.
\end{lemma}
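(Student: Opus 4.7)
The plan is to prove this lemma in two distinct parts: first the \emph{existence} of the required constraint, and then the \emph{necessity} of the hyperplane it induces. I would start by unfolding the fixpoint definition of $act(l)$ given earlier in stage 4. Since $x \in act(l)$, there is a least index $i$ such that $x \in act_i(l)$. If $i = 0$ then $x \in clk(l)$, so $x$ appears in a constraint on some outgoing edge of $l$ itself and we are done with $l' = l$. Otherwise $x$ enters $act(l)$ through the propagation rule $\rho(act_{i-1}(l''), e)$ for some edge $e = l \to{g,a,R} l''$ with $x \notin R$ and $x \in act_{i-1}(l'')$. Iterating this descent along indices, I obtain a path $l = l_0, l_1, \dots, l_k = l'$ in $A_4$ along which $x$ is never reset and such that $x \in clk(l')$, which is exactly the required existence statement.

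For the necessity of the induced hyperplane, I would appeal to Lemma \ref{lem-hyperplane} together with the bisimilarity chain established in Lemmas \ref{lem-stage1}--\ref{lem-stage4}. The constraint on the outgoing edge of $l'$ involving $x$ contributes, via the canonical decomposition and the forward analysis of Algorithm \ref{algo-zonegraph}, at least one hyperplane of the form $x = h$ (or $\iota = h'$ after the stage 4 renaming, for the representative clock $\iota$ of the class containing $x$) that bounds some zone in the zone graph of $A_4$ from above. By Lemma \ref{lem-hyperplane}, any such upper-bounding hyperplane is necessary: removing it would force the merger of the zone with its immediate delay successor, and stage 3 would have already performed this merger had it preserved bisimilarity. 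Since the hyperplane survived all four stages, removing it must break timed bisimilarity inside $A_4$'s zone graph, and hence, by transitivity of bisimilarity with $A$, would also break bisimilarity with $A$ itself.

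The main obstacle I expect is bridging the syntactic side (a constraint on an edge) and the semantic side (a hyperplane that actually bounds some reachable zone). It is not automatic that every elementary constraint on an outgoing edge of $l'$ becomes an effective bounding hyperplane, because it might be implied by the current zone and therefore redundant. To handle this, I would observe that stage 2 (Algorithm \ref{algo-stage2}, the implication-based pruning at lines 19--26) explicitly removes elementary constraints that are entailed by the remaining ones together with $Z_{i_j}\!\!\uparrow$; thus any constraint still present in $A_4$ strictly shapes at least one reachable zone, i.e.\ induces a genuine bounding hyperplane in $\mathcal{G}_{A_4}$. A second, smaller concern is the lower-bound case: constraints of the form $x > k$ appear as lower bounds of a zone but equivalently as upper bounds of the corresponding delay-predecessor zone, so the same argument via Lemma \ref{lem-hyperplane} still applies modulo the pre-stable structure guaranteed by stage 1. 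With these two observations in place, the necessity argument is a direct invocation of Lemma \ref{lem-hyperplane} and the bisimilarity chain.
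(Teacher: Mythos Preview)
Your proposal is correct and follows essentially the same approach as the paper: existence of the constraint via the definition of $act(l)$, and necessity of the induced hyperplane via Lemma~\ref{lem-hyperplane}. The paper's proof is considerably terser---it simply invokes ``by the definition of active clocks'' where you unfold the fixpoint descent, and it does not explicitly address your ``main obstacle'' about whether the surviving constraint genuinely bounds a reachable zone---so your version is in fact more careful than the original on those points.
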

\proof
We consider a location $l \in L$ and suppose $x \in act(l)$.
By the definition of active clocks,
there is a constraint involving clock $x$ on the outgoing edge of $l$
or on the outgoing edge of a location $l'$
such that there is a path from $l$ to $l'$
on which clock $x$ is not reset.

Now a constraint $x \bowtie k$ involving clock $x$ induces a hyperplane
in the zone graph of automaton $A_4$ that is of the form $x = k$.
Since from lemma \ref{lem-hyperplane}, every hyperplane appearing in the zone
graph of $A_4$ is necessary for preserving bisimulation,
this hyperplane too cannot be removed.

Since we consider an arbitrary location $l$ and some active clock $x$
of $l$, the lemma thus holds true for all active clocks in each location
of the timed automaton $A_4$.
\qed
\begin{definition}
\textbf{Minimal bisimilar TA}\index{minimal bisimilar TA}: For a given timed automaton $B$, a minimal bisimilar TA $B_1$ is one that is timed bisimilar to $B$ and has the minimum number of clocks possible.
\end{definition}
\begin{fact}
For every TA $B$, there exists a minimal bisimilar TA.
\end{fact}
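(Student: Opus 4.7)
The plan is to invoke the well-ordering principle of the natural numbers on a naturally chosen set of clock-counts. First, I would define
\[
S_B = \{\, n \in \mathbb{N} \mid \exists \text{ a TA } B' \text{ with exactly } n \text{ clocks such that } B' \text{ is timed bisimilar to } B \,\}.
\]
This set is non-empty since $B$ is trivially timed bisimilar to itself, so $|C_B| \in S_B$, where $C_B$ is the clock set of $B$. Moreover, $S_B$ is a subset of $\mathbb{N}$, since the clock set of any TA is finite by definition (as recalled in Section \ref{sec-ta}).

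Next, since $S_B$ is a non-empty subset of $\mathbb{N}$, the well-ordering principle guarantees a least element $k = \min S_B$. By the definition of $S_B$, there exists at least one TA $B_1$ with exactly $k$ clocks that is timed bisimilar to $B$. I would then verify that such a $B_1$ satisfies the definition of a minimal bisimilar TA: it is timed bisimilar to $B$ by construction, and no TA with fewer than $k$ clocks can be timed bisimilar to $B$, for otherwise that smaller clock count would belong to $S_B$, contradicting the minimality of $k$.

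There is no substantial technical obstacle here; the only subtlety to flag is that the fact asserts existence of a minimal bisimilar TA, not uniqueness (several non-isomorphic witnesses may all attain $k$ clocks), and that the argument is purely non-constructive. The constructive content, producing such a $B_1$ explicitly together with its clock count, is precisely what the four-stage procedure of Section \ref{sec-method} supplies, and is what the minimality argument built on Lemmas \ref{lem-hyperplane} and \ref{lem-plane} then certifies. Thus the fact can be stated and proved independently of the algorithm, and serves merely to guarantee that the notion of ``minimal bisimilar TA'' is well-defined before the main construction establishes how to compute one.
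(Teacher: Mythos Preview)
Your proposal is correct. The paper states this as a \emph{Fact} without supplying any proof, treating it as self-evident; your well-ordering argument on the set of clock-counts of bisimilar automata makes explicit precisely the reasoning the paper leaves implicit, so there is nothing to compare.
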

One can show that the clocks of a TA that is minimal bisimilar to $A_4$ can replace the clocks of $A_4$ which gives us the following lemma.
\begin{lemma} \label{lem-minclocks}
The timed automaton $A_4$ has the same number of clocks as a minimal bisimilar TA for $A$.
\end{lemma}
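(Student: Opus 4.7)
The plan is to establish that $A_4$ is both timed bisimilar to $A$ and uses no more clocks than any other TA timed bisimilar to $A$. Timed bisimilarity to $A$ follows immediately by chaining Lemmas \ref{lem-stage1}, \ref{lem-stage2}, \ref{lem-stage3}, and \ref{lem-stage4}, so the real work lies in showing that no bisimilar TA can use strictly fewer clocks than $A_4$. I would therefore let $B$ be an arbitrary TA timed bisimilar to $A$ (hence to $A_4$) and aim to derive $|C_B| \geq |C_{A_4}|$.

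First I would pre-process $B$ by applying Stages 1--3 to obtain a TA $B'$ with at most the same number of clocks as $B$ and still timed bisimilar to $A$. By construction, the pre-stable zone graph of $B'$ shares the structural features exploited in Stage 4: a location-wise partition into base zones, hyperplanes that fully bound each zone from above (by Lemma \ref{lem-pre-stability}), and a collection of active clocks at every location. By Lemma \ref{lem-hyperplane}, every hyperplane appearing in the zone graph of $A_4$ is essential for preserving bisimilarity, and the analogous property must hold for $B'$. Bisimilarity between $B'$ and $A_4$ then forces a one-to-one correspondence (modulo clock renaming) between the hyperplanes appearing in the two zone graphs, and hence an induced correspondence between active clocks at bisimilar locations.

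Next I would use this correspondence to translate any clock assignment of $B'$ into a proper coloring of the clock graph $G_{A_4}$ from Stage 4. The vertices of $G_{A_4}$ correspond to equivalence classes of co-varying active clocks; two vertices are adjacent precisely when the two classes cannot be represented by a single clock (either because they are simultaneously active at one location, or because the absence of an intervening reset forces distinct clocks on a path). Under the bisimulation-induced correspondence, any clock of $B'$ can represent at most one vertex from each ``adjacent'' pair, so the clocks of $B'$ induce a proper coloring, giving $|C_{B'}| \geq \chi(G_{A_4}) = |C_{A_4}|$, and consequently $|C_B| \geq |C_{A_4}|$.

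The main obstacle will be making the hyperplane correspondence between $B'$ and $A_4$ rigorous, since the two TAs may name their clocks differently and may even have different location structure after pre-processing. I would address this by working along a timed bisimulation relation between their pre-stable zone graphs and tracking how corner-point traces witnessed in one automaton must be matched in the other; this pins down each hyperplane of $A_4$ as a forced feature of the behavior and so of $B'$'s zone graph, which in turn forces the incompatibility edges of $G_{A_4}$ to be respected by any clock assignment on $B'$.
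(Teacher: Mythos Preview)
Your proposal is correct and follows essentially the same route as the paper's own proof. Both arguments pre-process the comparison automaton through the construction's stages so that its zone graph enjoys the single-bounding-hyperplane property, then use the timed bisimulation together with corner-point traces to set up a hyperplane-to-hyperplane (hence clock-to-clock) correspondence, and finally appeal to the optimality of the Stage~4 colouring to conclude $|C_{A_4}|\le |C_B|$. The only cosmetic differences are that the paper applies all four stages to the minimal comparison automaton (whereas you apply Stages~1--3, which suffices since Stage~4 only renames clocks), and that the paper phrases the final step as ``replacing the clocks of $A_4$ by clocks of $B$'' rather than as ``colouring $G_{A_4}$ with $B'$'s clocks''---these are two ways of saying the same thing.
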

%
\proof
We prove the lemma by showing that the clocks of $A_4$ can be replaced with the clocks of a minimal bisimilar TA that uses say $\lvert C_B\lvert$ clocks.
Let $A_4$ has $\lvert C_{A_4}\lvert$ clocks.
We know that if $B$ is a minimal TA, we have $\lvert C_B\lvert  \le \lvert C_{A_4}\lvert $.
Replacing clocks of a timed automaton
with another set of clocks is essentially \emph{renaming} of the clocks.
Since due to the operations in stage 4, any renaming cannot reduce the number of clocks in $A_4$
and if the clocks of $A_4$ can be replaced with the clocks of $B$,
we have $\lvert C_{A_4}\lvert  \le \lvert C_B\lvert $, thus giving $\lvert C_{A_4}\lvert  = \lvert C_B\lvert $.

We consider a minimal TA $B_1$ timed bisimilar to the given TA $A$ and apply the operations in the four stages on it to produce a TA $B$. The application of these stages on the minimal TA does not add to the number of clocks and since $B_1$ is already minimal, $B$ has the same number of clocks as $B_1$.
Due to the transformation from $B_1$ to $B$, it is
ensured that in $B$'s zone graph, for each zone in every location,
there exists a hyperplane that fully bounds it.
Similarly in the zone graph of $A_4$ too, for each zone of every location, there exists a hyperplane that fully bounds it.

We use the zone graphs of $A_4$ and $B$ for mapping the clocks of $A_4$ to the clocks of $B$.
From Lemma \ref{lem-plane}, for a location $l_{A_4}$ of $A_4$, corresponding to every clock $x \in act(l_{A_4})$,
a hyperplane of the form $x = k$ corresponding to a constraint $x \bowtie k$
exists in the zone graph of $A_4$
which cannot be removed so that the modified TA remains timed bisimilar.
Since $A_4$ and $B$ are timed bisimilar, there is a corresponding hyperplane, say $y = k'$,
induced by a constraint $y \bowtie k'$ which too cannot be removed from $B$ while preserving timed bisimulation.

In the following part of the proof,
the cp-traces we consider are the ones that start from the initial state of the zone graph and
the delays in the trace are such that it moves from an entry corner point to the exit corner point of the same zone and from an exit corner point to an entry corner point of the immediate delay successor zone.
We consider multiple finite cp-traces in the zone graph such that all the corner points in the zone graph are traversed by some cp-trace at least once.

Now we consider a part of a cp-trace in the zone graph of $A_4$ from the initial state to an 
exit corner point $(l_{A_4}, v)$ of zone $Z_{A_4}$.
Suppose the hyperplane $x = k$ bounds the zone $Z_{A_4}$
and the constraint $x \bowtie k$ induces the hyperplane and it cannot be removed
while preserving timed bisimulation.
Since $A_4$ and $B$ are timed bisimilar,
there exists a state $(l_B, v')$ in the zone graph of $B$ that is timed bisimilar to $(l_{A_4}, v)$
and $(l_B, v')$ is an exit corner point of some zone $Z_B$.
Continuing this way, we consider \emph{all} the different exit corner points in $Z_{A_4}$ and
the corresponding bisimilar corner points in $Z_B$ to find out the clock involved in the hyperplane bounding the zone $Z_{A_4}$, that is $x$ and the clock in the corresponding hyperplane bounding $Z_B$ that is say $y$ and thus we can replace clock $x$ in $A_4$ with clock $y$ of $B$. This is repeated until all the occurrences of all the clocks in $A_4$ are replaced with clocks in $B$.

As mentioned above, here $x \in act(l_{A_4})$ and $y \in act(l_B)$.
Since an active clock of a location cannot be replaced with another active clock
of the same location, every active clock in every location in $A_4$, 
can be mapped uniquely to a clock in $B$
which can replace the clock in $A_4$.
Stated otherwise, if $x$ could be replaced with any of the two clocks
$y_1, y_2 \in act(l_B)$,
then in $B$ itself, $y_1$ could have replaced $y_2$ or vice versa.
Continuing this way, we have a mapping for each clock in $A_4$ and
the clocks associated to the constraints in
all the edges in $A_4$ are replaced with clocks in $B$.

Now since the clocks of $A_4$ have been replaced with the clocks of $B$
and from the operations of stage 4,
any renaming of clocks of $A_4$ requires at least as many clocks as in $A_4$,
we have $\lvert C_{A_4} \lvert \le  \lvert C_B \lvert$.
On the other hand, since $B$ is a minimal bisimilar TA, we have $\lvert C_B \lvert \le C_{A_4} \lvert$. 
Thus we have $\lvert C_{A_4} \lvert = \lvert C_B \lvert$, i.e. $A_4$ and $B$ have the same number of clocks. 
Stated otherwise, $A_4$ has the least number of clocks among the TAs that are timed bisimilar to the original timed automaton $A$.
\qed

\begin{theorem}\index{clock reduction!preserving timed bisimulation}
There exists an algorithm to construct a TA $A_4$ that is timed bisimilar to a given TA $A$
such that among all the timed automata that are timed bisimilar to $A$, $A_4$ has the minimum number of clocks.
Further, the algorithm runs in time that is doubly exponential in the number of clocks of $A$.
\end{theorem}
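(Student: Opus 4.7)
\proof
The plan is to assemble the theorem directly from the four stage-wise lemmas and the minimality lemma already proved, together with the previously established complexity bound. First I would observe that timed bisimilarity is transitive, so composing Lemma~\ref{lem-stage1} ($A \sim A_1$), Lemma~\ref{lem-stage2} ($A_1 \sim A_2$), Lemma~\ref{lem-stage3} ($A_2 \sim A_3$), and Lemma~\ref{lem-stage4} ($A_3 \sim A_4$) gives $A \sim A_4$, which settles the correctness half of the statement.

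Next I would invoke Lemma~\ref{lem-minclocks}, which asserts that the number of clocks in $A_4$ equals the number of clocks in a minimal bisimilar TA for $A$. Combined with the preceding paragraph, this shows that $A_4$ is in fact a minimal bisimilar TA for $A$, so no TA bisimilar to $A$ uses fewer clocks than $A_4$. No further construction is needed here; the work has already been done in proving Lemma~\ref{lem-minclocks} via the clock-replacement mapping using cp-traces and the bounding hyperplanes of Lemma~\ref{lem-plane}.

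For the complexity bound, I would appeal to the previously proved theorem that the four stages together run in 2-EXPTIME. Concretely, stages 1--3 each run in time exponential in $|C|$ (zone graph construction, location splitting whose output already has exponentially many locations in $|C|$, and polynomially many bisimilarity checks each of which is EXPTIME), while stage~4 dominates because determining the chromatic number of the clock graph takes time exponential in the number of its vertices, which is already exponential in $|C|$. Taking the maximum yields the doubly exponential bound in the number of clocks of $A$.

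The only subtlety worth flagging is that one must be careful that ``minimal bisimilar TA'' in Lemma~\ref{lem-minclocks} refers to the quantity we really want to minimise in the theorem statement, namely the minimum over \emph{all} TAs timed bisimilar to $A$, not merely over TAs obtainable by our construction. This is precisely how a minimal bisimilar TA is defined in the excerpt, so the identification goes through without any extra argument, and the theorem follows. \qed
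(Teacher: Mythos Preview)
Your proposal is correct and mirrors the paper's own treatment: the paper states this theorem as an immediate consequence of Lemmas~\ref{lem-stage1}--\ref{lem-stage4}, Lemma~\ref{lem-minclocks}, and the earlier 2-EXPTIME theorem, without giving a separate proof. One minor inaccuracy: in stage~3 the number of bisimilarity checks is bounded by the number of zones (hence exponential in $|C|$, not polynomial), but since each check is EXPTIME the stage is still EXPTIME overall, so your conclusion stands.
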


\section{Conclusion}\label{sec-conc}
In this paper, we have described an algorithm, which given a timed automaton $A$, produces another timed automaton $A_4$ with the smallest number of clocks that is timed bisimilar to $A$.
Since $A_4$ is timed bisimilar to $A$, it also accepts the same timed language as $A$.
The problem appears in a different form
in \cite{LLW1} where it was left open.
In \cite{LLW1}, this problem has been described as 
given a timed automaton $A$ and a set $C$ of clocks,
whether there exists a timed automaton that is
timed bisimilar to the original automaton $A$ and has $|C|$ clocks.
It is easy to see that a solution to our problem solves
the problem in \cite{LLW1} and vice versa.

For reducing the number of clocks of the timed automaton, we rely on a semantic representation of the timed automaton rather than its syntactic form as in \cite{DY1}. This helps us to reason about the behaviour of the timed automaton more effectively.
We could just as well have used a region graph construction,
the asymptotic complexity would remain the same.
However, the zone graph we use in our approach is usually much smaller in size than the region graph.
Besides, the size of the zone graph is independent of the constants used in the timed automaton.
There is an exponential increase in the number of locations while producing the TA with the minimal number of clocks.
However, there is no addition of new valuations to the underlying state space of the timed automaton in stage 2,
since the splitting of a location $l$,
described in stage 2, involves distributing the zones of $l$ across the locations $l$ is split into.

We note that pre-stability of the zones with respect to transitions is essential for the operations in stage 2.
Notice that regions are inherently pre-stable \cite{Sri1} and therefore the region graph construction would work here as well
though the number of locations obtained through splitting would be larger than in the case of zone graphs.

Our algorithm has a 2-EXPTIME upper bound,
however, it needs to be used only once to reduce the number of clocks.
Reachability analysis and often model checking
for timed automata like TCTL model checking are PSPACE-complete \cite{AD1,AL1,ACD1}.
Since reachability analysis and model checking the timed automata
are usually performed several times to verify different properties,
we may consider first reducing the number of clocks before doing the model checking.
With fewer clocks, the DBM representation of a zone will also be smaller in size
which may as well contribute towards reducing the computation time for model checking.
\bibliographystyle{plain}
\bibliography{ClockReduction}
\end{document}